\numberwithin{equation}{section}
\newtheorem{lemma}{Lemma}
\newtheorem{theorem}{Theorem}
\newtheorem{proposition}{Proposition}
\newtheorem*{corollary}{Corollary}
\theoremstyle{definition}
\newtheorem*{remark}{Remark}
\renewcommand{\qed}{}
\newcommand{\R}{{\mathbbm R}}
\newcommand{\C}{{\mathbbm C}}
\newcommand{\N}{{\mathbbm N}}
\newcommand{\switch}{0}
\begin{document}

\title{Confidence sets in nonparametric calibration of \\exponential L\'evy models}

\author{Jakob S\"ohl\thanks{I thank Markus Rei\ss~for fruitful discussions at all stages of this work and
Mathias Trabs for helpful comments on the manuscript.
I am thankful to the two referees for their comments and suggestions, which led
to improvements of the manuscript.
Most of the project was done at Humboldt--Universit\"at zu Berlin and
the research was supported by the Deutsche Forschungsgemeinschaft through the
SFB 649 ``Economic Risk''.
E-mail address: j.soehl@statslab.cam.ac.uk}\\
University of Cambridge}

\date{\today}

\maketitle

\begin{abstract}

Confidence intervals and joint confidence sets are constructed for the
nonparametric calibration of exponential L\'evy models based on prices of
European options. To this end, we show joint asymptotic normality in the
spectral
calibration method for the estimators of the volatility, the drift, the
jump intensity and the L\'evy density at finitely many points.

\end{abstract}

\par
\noindent
\textbf{Keywords:} European option $\cdot$ Jump diffusion $\cdot$ Confidence sets $\cdot$ Asymptotic normality $\cdot$ Nonlinear inverse problem\\
\\
\textbf{MSC (2010):} 60G51 $\cdot$ 62G15 $\cdot$ 91G70\\
\\
\textbf{JEL Classification:} C14 $\cdot$ G13

\section{Introduction}
\label{intro}
The unknown future development of financial markets, as faced by 
participants including investors, traders and companies, can be understood
to consist
of model risk and ``Knightian uncertainty'' \cite{ellsberg,knight}.
The first describes the risk for a given calibrated model and can be evaluated by probabilistic methods, whereas the second incorporates the lack of knowledge on the underlying probability measure and is typically treated by worst case scenarios, for example, by stress testing, which amounts to taking the supremum or infimum over a range of probability measures.

In order to address such questions of robustness it is important to quantify the uncertainty in the underlying probability measure.
By the choice of the model, there is a trade--off between the calibration error and the misspecification of the model.
Both types of uncertainty are unavoidable but within a model the calibration
error is traceable, at least with assumptions on the errors. In general, large
models reduce misspecification, which motivates our choice for a rich
nonparametric model. We then assess the calibration error by constructing
confidence sets.

More precisely, we consider the nonparametric calibration when the risk--neutral price of a stock $(S_t)$ follows an exponential L\'evy model
\begin{align}\label{expLevy}
S_t=Se^{rt+X_t}\quad     \text{with a L\'evy process }X_t  \text{ for }t\ge0.
\end{align}
In this paper we restrict ourselves to L\'evy
processes $X_t$ with finite activity.
Exponential L\'evy models generalize the Black--Scholes model by accounting in
addition to volatility and drift for jumps in the price process.
They are capable of reproducing not only a volatility smile or skew but also the effect that the smile or skew is more pronounced for shorter maturities.
A thorough discussion of this model is given in the monograph by Cont and Tankov~\cite{ContTankov3}.
They introduced in \cite{ContTankov,ContTankov2} a nonparametric calibration method for this model based on prices of European call and put options, in which a least squares approach is penalized by relative entropy. Regularizing by a spectral cut--off, Belomestny and Rei\ss~\cite{calibration} used a different approach to the same estimation problem. Their method achieves the minimax rates of convergence, meaning that their estimators optimize the rate of convergence for the least favorable constellation in a given class of L\'evy processes.
We show asymptotic normality and construct confidence sets and intervals for their estimation procedure. Methods similar to theirs were also applied by Belomestny \cite{belomestny} to estimate the fractional order of regular L\'evy processes of exponential type, by Belomestny and Schoenmakers~\cite{BelomestnySchoenmakers2011} to calibrate a Libor model and by Trabs \cite{trabs:2011} to estimate self--decomposable L\'evy processes.

Confidence sets measure how reliable the estimation is. This is particularly important if the calibrated model is to be used for pricing and hedging.
For a recent review on pricing and hedging in exponential L\'evy models see~\cite{pricing_and_hedging} and the references therein. For the influence of model uncertainty on the pricing see~\cite{cont}.

Nonparametric confidence intervals and sets for L\'evy triplets have not been studied with the notable exception of the work by Figueroa--L{\'o}pez~\cite{figueroa}.
The work is more general in the sense that beyond pointwise confidence intervals
also confidence bands are constructed. On the other hand, the method is based on
direct high--frequency observations so that the statistical problem of
estimating the L\'evy density is easier than in our set--up.
We observe the L\'evy process only indirectly since our method is based on
option prices. This indirect observation scheme does not correspond to direct
observations at high frequency but at low frequency, where the time between
observations is fixed and does not tend to zero. An underlying deconvolution
problem has to be solved and the calibration is a nonlinear inverse problem,
which is mildly ill--posed for volatility zero and severely ill--posed for
positive volatility.

Confidence intervals and sets in nonparametric problems are a subtle issue. Usually some smoothness assumptions on the unknown function are imposed and then the size of a confidence set, or more precisely, the rate at which the confidence set becomes smaller depends on the assumed smoothness. A confidence set adjusting to an unknown smoothness of the estimated function is called adaptive if the confidence set becomes smaller at the same rate as if the smoothness were known. In the nonparametric problem of density estimation Low~\cite{low1997} proved that adaptive confidence intervals do not exist.
Whether in our setting adaptive confidence intervals for the volatility, the
drift and the jump intensity exist is an open question.
We show asymptotic normality for the parametric estimators of the volatility,
the drift and the jump intensity. We also proof asymptotic normality for the
pointwise estimators of the L\'evy density. The joint asymptotic distribution of
these estimators is derived in both the mildly and the severely ill--posed case.
This is used to construct confidence intervals and joint confidence sets.
The asymptotic normality results are based on undersmoothing and on a linearization of the stochastic errors.

The paper is organized as follows. The model and the estimation method are described in Section~\ref{secEstimators}. The main results are formulated in Section~\ref{secNormality}. They are applied to confidence intervals in Section~\ref{secConfidence}. We conclude in Section~\ref{secConclusion}. Some auxiliary results and most proofs are deferred to Section~\ref{secProofs}.

\section{The model and the estimators}\label{secEstimators}

\subsection{The model}

For an underlying $(S_t)$, a strike price $K$ and a maturity $T$, we denote by $\mathcal{C}(K,T)$ and $\mathcal{P}(K,T)$ the prices of European call and put options which are determined by the pricing formula.
We suppose that the risk--neutral price of the stock $(S_t)$ follows the
exponential L\'evy model~\eqref{expLevy} with respect to an equivalent
martingale measure $\mathbb{P}$ and that $X_t$ is a finite activity L\'evy
process.
$S>0$ denotes the present value of the stock and $r\ge0$ the riskless
interest rate.
We fix a maturity $T$ and assume that the observed option prices correspond to
different strike prices $(K_j)$ and are given by the value of the pricing
formula corrupted by noise as motivated by Renault~\cite{renault1997}:
\begin{equation}\label{observationsY}
Y_j=\mathcal{C}(K_j,T)+\eta_j\xi_j, \quad j=1,\dots,n,
\end{equation}
with $\eta_j>0$ and random variables $\xi_j$.
The minimax result in~\cite{calibration} is shown for general errors $(\xi_j)$
which are independent centered random variables with $\mathbb{E}[\xi_j^2]=1$ and
$\sup_j\mathbb{E}[\xi_j^4]<\infty$. The noise levels $(\eta_j)$ can be estimated
nonparametrically, for example, with the method by Fan and
Yao~\cite{FanYao1998}.
As European put and call prices are linked by the put--call parity the observations may alternatively be given by put prices in~\eqref{observationsY}.
We transform the observations to a regression problem on the function
\[\mathcal{O}(x):=\left\{ \begin{array}{ll}
                          S^{-1}\mathcal{C}(x,T), & \quad x\ge0,\\
                          S^{-1}\mathcal{P}(x,T), & \quad x<0,
                          \end{array}
\right.\]
where $x:=\log(K/S)-rT$ denotes the log--forward moneyness.
The regression model may then be written as
\begin{equation}\label{regression}
O_j=\mathcal{O}(x_j)+\delta_j\xi_j,
\end{equation}
where $\delta_j=S^{-1}\eta_j$.

\subsection{The estimation method}

We call the volatility of a L\'evy process~$\sigma$, the drift~$\gamma$ and the
jump intensity~$\lambda$. 
We assume that the jump distribution is absolutely continuous and call its
density~$\nu$.
We denote by $\mu(x):=e^x\nu(x)$ the corresponding exponentially weighted jump
density. The aim is to estimate the L\'evy triplet
$\mathcal{T}=(\sigma^2,\gamma,\mu)$.
In the remainder of this section we present the spectral calibration method of Belomestny and Rei\ss~\cite{calibration}.
The method is based on an option pricing formula by Carr and Madan~\cite{CarrMadan}, which relates the Fourier transform $\mathcal{FO}(u):=\int_{-\infty}^{\infty}\mathcal{O}(x)e^{iux}\mathrm{d}x$ to the characteristic function $\varphi_T(u):=\mathbb{E}[e^{iuX_T}]$.
That is why, we define
\begin{equation}\label{psi}
\begin{split}
\psi(u)&:=\frac{1}{T}\log\left( 1+iu(1+iu)\mathcal{F} \mathcal{O}(u) \right)=\frac1{T}\log(\varphi_T(u-i))\\
&\phantom{:}=-\frac{\sigma^2u^2}{2}+i(\sigma^2+\gamma)u +(\sigma^2/2+\gamma-\lambda)+\mathcal{F}\mu(u),
\end{split}
\end{equation}
where the first equality is given by the above mentioned pricing formula and the
second by the L\'evy--Khintchine representation. This equation links the
observations of $\mathcal{O}$ to the L\'evy triplet that we want to estimate.
Let $\mathcal{O}_n$ be an approximation on the true function $\mathcal{O}$. For
example, $\mathcal{O}_n$ can be obtained by linear interpolation of the
data~\eqref{regression}.
We further define the empirical counterpart of $\psi$ by
\begin{align*}
\psi_{n}(u)&:=\frac{1}{T}\log_{\ge \kappa (u)}\left( 1+iu(1+iu)\mathcal{F} \mathcal{O}_n(u) \right),
\end{align*}
where the trimmed logarithm $\log_{\ge\kappa}:\C\backslash\{0\}\to\C$ is given by
\[\log_{\ge\kappa}(z):=\left\{\begin{array}{ll}\log(z),& |z|\ge\kappa\\\log(\kappa\, z/|z|),&|z|<\kappa\end{array}\right..\]
The logarithms are taken in such a way that $\psi$ and $\psi_{n}$ are continuous
functions with $\psi(0)=\psi_{n}(0)=0$ and $\kappa(u)\in(0,1)$ is specified
in~\cite{calibration}.
Considering \eqref{psi} as a quadratic polynomial in $u$ disturbed by $\mathcal{F}\mu$ motivates the following definition of the estimators for a cut--off value $U>0$:
\begin{align}
\hat{\sigma}^2 &:= \int_{-U}^{U}\mathrm{Re}(\psi_{n}(u))w_{\sigma}^{U}(u)\mathrm{d}u,\label{hat sigma}\\
\hat{\gamma} &:= -\hat{\sigma}^2 +\int_{-U}^{U}\mathrm{Im}(\psi_{n}(u))w_{\gamma}^{U}(u)\mathrm{d}u,\label{hat gamma}\\
\hat{\lambda} &:= \frac{\hat{\sigma}^2}{2}+\hat\gamma -\int_{-U}^{U}\mathrm{Re}(\psi_{n}(u))w_{\lambda}^{U}(u)\mathrm{d}u,\label{hat lambda}
\end{align}
where the weight functions $w_{\sigma}^{U}$, $w_{\gamma}^{U}$ and $w_{\lambda}^{U}$ satisfy
\begin{equation}\label{weight}
\begin{split}
\int_{-U}^{U} & \frac{-u^2}{2} w_{\sigma}^{U}(u)\mathrm{d}u=1,\quad
\int_{-U}^{U} u w_{\gamma}^{U}(u)\mathrm{d}u=1,\quad
\int_{-U}^{U} w_{\lambda}^{U}(u)\mathrm{d}u=1,\\
\int_{-U}^{U} & w_{\sigma}^{U}(u)\mathrm{d}u=0,\quad
\int_{-U}^{U} u^2 w_{\lambda}^{U}(u)\mathrm{d}u=0.
\end{split}
\end{equation}
The estimator for $\mu$ is defined by a smoothed inverse Fourier transform of the remainder
\begin{equation}\label{hat mu}
\hat\mu(x):=\mathcal{F}^{-1}\left[\left(\psi_{n}(u) +\frac{\hat\sigma^2}{2}(u-i)^2-i\hat\gamma(u-i) +\hat\lambda\right)w_\mu^{U}(u)\right](x).
\end{equation}
The choice of the weight functions is discussed in~\cite{soehltrabs}, where also possible weight functions are given.
The weight functions for all $U>0$ can be obtained from $w_\sigma^1$, $w_\gamma^1$, $w_\lambda^1$ and $w_\mu^1$ by rescaling:
\begin{align*}
&w_{\sigma}^{U}(u)=U^{-3}w_{\sigma}^{1}(u/U), \quad w_{\gamma}^{U}(u)=U^{-2}w_{\gamma}^{1}(u/U), \\ &w_{\lambda}^{U}(u)=U^{-1}w_{\lambda}^{1}(u/U), \quad w_\mu^U(u)=w_\mu^1(u/U).
\end{align*}
Since  $\psi_{n}(-u)=\overline{\psi_{n}(u)}$ only the symmetric part of
$w_\sigma^1$, $w_\lambda^1$ and the antisymmetric part of $w_\gamma^1$ matter.
The antisymmetric part of $w_\mu^1$ contributes a purely imaginary part to~$\hat
\mu(x)$. Without loss of generality we will always assume $w_\sigma^1$,
$w_\lambda^1$, $w_\mu^1$ to be symmetric and $w_\gamma^1$ to be antisymmetric.
We further assume that the support of $w_\sigma^1$, $w_\gamma^1$, $w_\lambda^1$ and $w_\mu^1$ is contained in $[-1,1]$.
We define the estimation error $\Delta\hat{\sigma}^2:=\hat{\sigma}^2 -\sigma^2$ and likewise for the other estimators. We will also use the notation $\Delta\psi_{n}:=\psi_{n}-\psi$.
The estimation error $\Delta\hat{\sigma}^2$ can be decomposed as
\begin{equation}\label{error_sigma}
\begin{split}
\Delta\hat{\sigma}^2: &=\frac{2}{U^{2}}\int_{0}^{1}\mathrm{Re}(\mathcal{F}\mu(Uu))w_{\sigma}^{1}(u)\mathrm{d}u 
+ \frac{2}{U^{2}}\int_{0}^{1} \mathrm{Re}(\Delta\psi_{n}(Uu) ) w_{\sigma}^{1}(u)\mathrm{d}u.
\end{split}
\end{equation}
The first term is the approximation error and decreases in the cut--off value $U$ due to the decay of $\mathcal{F}\mu$. The second is the stochastic error and increases in $U$ by the growth of  $\Delta\psi_{n}$. For growing sample size $n$ the term $\Delta\psi_{n}$ becomes smaller so that the stochastic error decays even if we let $U\to\infty$ as $n\to\infty$. For $\sigma=0$ the term $\Delta\psi_{n}(u)$ grows polynomial in $u$ so that we can let $U$ tend polynomially to infinity, whereas for $\sigma>0$ it grows exponentially in $u$ and we can let $U$ tend only logarithmically to infinity. This is the reason for the polynomial and logarithmic convergence rates for $\sigma=0$ and for $\sigma>0$, respectively.
For fixed sample size the cut--off value $U$ is the crucial tuning parameter in this method and allows a trade--off between the error terms. The influence of the cut--off value $U$ is analogous to the influence of the bandwidth $h$ on kernel estimators, more precisely $U^{-1}$ corresponds to $h$.
The other estimation errors allow similar decompositions as $\Delta\hat{\sigma}^2$ in~\eqref{error_sigma}.

We shall analyze the asymptotic properties of the stochastic errors in depth. To bound the approximation errors some smoothness assumption is necessary.
We assume that the L\'evy triplet belongs to a smoothness class $\mathcal{G}_s(R,\sigma_{\max})$ with $s\in\N$ and $R,\sigma_{\max}>0$ specified in~ \cite[Definition~4.1]{calibration}.
%
%
The assumption $\mathcal{T}\in\mathcal{G}_s(R,\sigma_{\max})$ includes a smoothness assumption of order $s$ on $\mu$ leading to a decay of $\mathcal{F}\mu$.
To profit from this decay when bounding the approximation error, we assume the weight functions to be of order $s$, this means
\begin{align}\label{order s}
\mathcal{F}(w^1_\sigma(u)/u^s), \mathcal{F}(w^1_\gamma(u)/u^s), \mathcal{F}(w^1_\gamma(u)/u^s),  \mathcal{F}\left((1-w_\mu^{1}(u))/u^s\right)\in L^1(\R).
\end{align}

\subsection{Discussion of the model}

In this paper we restrict to the nonparametric calibration of finite activity
L\'evy processes. The nonlinear penalized least--squares method by Cont and
Tankov \cite{ContTankov} and the spectral calibration method by Belomestny and
Rei\ss\ \cite{calibration} are mainly considered for finite activity L\'evy
processes. Trabs \cite{trabs:2011} extended the spectral calibration method to
self--decomposable L\'evy processes, which have infinite activity
and Blumenthal--Getoor index zero. Extensions to higher
Blumenthal--Getoor indexes are of interest but it might be difficult to
distinguish statistically between volatility and small jumps.
We define the measure
$\nu_\sigma(\mathrm{d}x):=\sigma^2\delta_0(\mathrm{d}x)+x^2\nu(x)\mathrm{d}x$,
where $\delta_0$ denotes the Dirac measure at zero.
Its structure in a neighborhood of zero is very natural since it is most useful
in characterizing weak convergence of the distribution of the L\'evy process in
view of Theorem~VII.2.9 and Remark~VII.2.10 in Jacod and
Shiryaev~\cite{JacodShiryaev2003}. The measure $\nu_\sigma$ determines the
variance of a L\'evy process and is relevant for calculating the~$\Delta$ in
quadratic hedging as noted in~\cite{NeumannReiss2009}. Volatility and small
jumps both contribute to the mass assigned by $\nu_\sigma$ to a neighborhood of
zero. 
One possibility to separate the jumps and the volatility is to assume
finite jump activity. While other assumptions are possible some restriction is
necessary here.
Indeed, Neumann and Rei\ss~\cite{NeumannReiss2009} point out in their Remark~3.2
that without further restrictions the volatility cannot be estimated
consistently.
In Section~2.3 of \cite{soehl} the spectral calibration method designed for
finite intensity
processes is applied to some infinite activity L\'evy processes,
namely to symmetric stable L\'evy processes.
This suggests that in the misspecified case of infinite activity $\hat\sigma^2$
has to be interpreted as the joint quantity of $\sigma^2$ and the small jumps
or, more precisely, as the mass assigned by $\nu_\sigma$ to a neighborhood of
zero with size proportional to $U^{-1}$. In this case $\hat\nu$ should be
consider only outside this neighborhood as an estimator for $\nu$.

\section{Asymptotic normality}\label{secNormality}

\subsection{The main results}

The aim of this section is to establish asymptotic normality results for the estimators. We would like to state that the appropriately scaled errors of the estimators converge to normal random variables.
The starting point of our error analysis is the decomposition
\eqref{error_sigma} into the approximation error and the stochastic error.
The approximation error is deterministic and only the stochastic error can be expected to converge with appropriate scaling to a normal random variable.
It is common practice to resolve this problem by undersmoothing, which means that the tuning parameter is chosen such that the approximation error becomes asymptotically negligible.

To simplify the asymptotic analysis of the stochastic errors, we do not work
with the regression model \eqref{regression} but with the Gaussian white noise
model. This is an idealized observation scheme, where the terms are easier to
analyze.
At the same time asymptotic results may be transferred to the regression model. The Gaussian white noise model is given by
\begin{align}\label{whitenoise}
\mathrm{d}Z_{n}(x)=\mathcal{O}(x)\mathrm{d}x+\epsilon_n \;
\rho(x)\mathrm{d}W(x),
\end{align}
where $W$ is a two--sided Brownian motion, $\rho \in L^2(\R)$ and
$\epsilon_n>0$.
In the case of equidistant design the precise connection to the regression
model~\eqref{regression} is given by $\rho(x_j)=\delta_j$ and
$\epsilon_n=\tfrac{n+1}{n-1}(x_n-x_1)n^{-1/2}$, where $x_1$ and $x_n$ are the
minimal and maximal design points and where we assume that the range of
observations $(x_n-x_1)$ grows slower than $n^{1/2}$ such that
$\epsilon_n\to0$ as $n\to\infty$.
General designs $x_j=F^{-1}(j/(n+1))$ for a c.d.f.
$F:\R\to[0,1]$ with p.d.f. $f$ can be treated by the Gaussian white noise model
$\mathcal{O}(x)\mathrm{d}x+n^{-1/2}\delta(x) f(x)^{-1/2}\mathrm{d}W(x)$,
where $\delta(x_j)=\delta_j$.
Transferring asymptotic results from the Gaussian white noise model to the
regression model is formally justified by the concept of asymptotic equivalence,
which applies in particular to lower bounds and confidence statements.
Brown and Low~\cite{BrownLow} show that the regression model~\eqref{regression}
with Gaussian errors is asymptotically equivalent to the Gaussian white noise
model~\eqref{whitenoise}.
For non--Gaussian errors we refer to Grama and Nussbaum~\cite{GramaNussbaum}.
Their main assumption on the errors is slightly more than Hellinger
differentiability, which is a smoothness assumption on the distributions of the
errors.

The simplified approach of using the Gaussian white noise model to construct
confidence sets is well suited to derive asymptotic
normality and to determine the quantitative expression of the asymptotic
variance. Nevertheless, it is an idealized model and the ultimate interest is in
the regression model.
Obtaining results directly in the regression model would probably lead to less
assumptions than combining the Gaussian white noise model with an asymptotic
equivalence result.
Details of the application of the asymptotic equivalence result can be found
at the beginning of Section~\ref{secProofs}.

The stochastic errors involve the term $\Delta\psi_{n}(Uu)=\psi_n(Uu)-\psi(Uu)$,
which is a
difference of two logarithms.
In the definition of $\psi_n$, we take
$\mathcal{FO}_n=\mathcal{F}(\mathrm{d}Z_n)$ and thus define the empirical
version of $\mathcal{FO}$ directly without constructing in an intermediate step
an empirical version $\mathcal{O}_n$ of $\mathcal{O}$.
For $z,z'\in\C\backslash \{0\}$ and $\kappa>0$ it holds $\log_{\ge
\kappa}(z)-\log(z')=\log_{\ge \kappa/|z'|}\left(z/z'\right)$.
That yields
\begin{align*}
\Delta\psi_{n}(Uu)
&=\frac{1}{T}\log_{\ge \kappa^{U}(u)}\Big(1+\frac{\epsilon_n \:
iUu(1+iUu)}{1+iUu(1+iUu)\mathcal{FO}(Uu)}\int_{-\infty}^{\infty}e^{iUux}\rho (x)
\mathrm{d}W(x)\Big),
\end{align*}
where $\kappa^{U}(u):=\kappa(Uu)/|1+iUu(1+iUu)\mathcal{FO}(Uu)|\le 1/2$, see~\cite[(6.3)]{calibration}.
We define a linearization $\mathcal{L}_{n,U}$ of the logarithm and the remainder term $\mathcal{R}_{n,U}$ by
\begin{align}
\mathcal{L}_{n,U}(u)&:=\frac{\epsilon_n \:
iUu(1+iUu)}{T(1+iUu(1+iUu)\mathcal{FO}(Uu))}\int_{-\infty}^{\infty}e^{iUux}\rho
(x) \mathrm{d}W(x),\label{eqLinearization}\\
\mathcal{R}_{n,U}(u)&:=\Delta\psi_{n}(Uu)-\mathcal{L}_{n,U}(u).\label{eqRemainder}
\end{align}

To ensure continuity of the Gaussian process $X(u)=\int_{-\infty}^\infty
e^{iux}\rho(x)\mathrm{d}W(x)$
we assume that there is a $p>1$ such that
$\int_{-\infty}^{\infty}(1+|x|)^{p}\rho(x)^2\mathrm{d}x<\infty$.
In \cite{polar} it is shown that on this assumption $X$ satisfies the Kolmogorov--Chentsov criterion~\cite[p. 57]{kallenberg} and thus has a continuous version. In the sequel we are always working with this version.

The remainder term $\mathcal{R}_{n,U}$ in~\eqref{eqRemainder} is small when the argument of the logarithm is close to one, that is when $\mathcal{L}_{n,U}$ is small.
Since we are integrating over the unit interval in \eqref{error_sigma} we want $\mathcal{L}_{n,U}$ to be uniformly small.
We shall use the notation $A(x)\lesssim B(x)$ as $x\to\infty$ synonymously with the Landau notation $A(x)=O(B(x))$ as $x\to\infty$, meaning that there exist $M>0$ and $x_0\in\R$ such that $A(x)\le M B(x)$ for all $x\ge x_0$.

\begin{proposition}{\label{sup L}}
For all $q\ge1$ holds
\[\mathbb{E}\left[\sup_{u\in[-1,1]}|\mathcal{L}_{n,U}(u)|^q\right]^{1/q} \lesssim \left\{ \begin{array}{ll}
\epsilon_n U^2 \sqrt{\log(U)},        & \quad \text{for } \sigma=0,\\
\epsilon_n U^2 \exp(T\sigma^2U^2/2),  & \quad \text{for } \sigma>0,
\end{array}
\right.\]
as $U\to\infty$.
\end{proposition}

This proposition is proved in Section~\ref{secProofs} by metric entropy arguments.
In the following theorems we control the supremum of $\mathcal{L}_{n,U}$ and thus the remainder term $\mathcal{R}_{n,U}$ by the conditions
$\epsilon_n U_n^2\sqrt{\log(U_n)} \to 0$ and $\epsilon_n U_n^2\exp(T\sigma^2U_n^2/2) \to 0$ for $\sigma=0$ and for $\sigma>0$, respectively.
Then the asymptotic distribution of the stochastic errors $\int_0^1\Delta\psi_{n}(Uu)w(u)\mathrm{d}u$ is governed by the linearized stochastic errors $\int_0^1\mathcal{L}_{n,U}(u)w(u)\mathrm{d}u$ and the remainder term $\int_0^1\mathcal{R}_{n,U}(u)w(u)\mathrm{d}u$ is asymptotically negligible. In the case $\sigma=0$ the stronger condition $\epsilon_n U_n^{5/2}\to0$ is assumed, which is needed for the stochastic errors to converge to zero.

For the approximation error to be asymptotically negligible we need to undersmooth by choosing the cut--off value $U_n$ large enough such that $\epsilon_n U_n^{(2s+5)/2}\to\infty$ in the case of volatility zero and by $\epsilon_n U_n^{s+1}\exp(T\sigma^2U_n^2/2)\to\infty$ in the case of positive volatility, where $s$ is the smoothness assumed on the exponentially weighted jump measure $\mu$.

In the results on asymptotic normality we will also include the estimator
$\hat\mu(0)$ of the jump density at zero. This only makes sense by our
smoothness assumption on $\mu$ since there is no way of detecting jumps of
height zero. 
The asymptotic distribution of $\hat\mu(0)$ is not determined by the weight
function $w^1_\mu$ but by the effective weight function
\[w_0(u):=w_\mu^1(u)+w_\sigma^1(u)\int_{-1}^1
v^2w_\mu^1(v)\mathrm{d}v/2-w_\lambda^1(u)\int_{-1}^1w_\mu^1(v)\mathrm{d}v.\]

The first theorem states the joint asymptotic normality result for the mildly ill--posed case of volatility zero.

\begin{theorem}\label{asymptotic distribution =0}
Let $\sigma=0$. Let $\rho$ be continuous at
$T\gamma,x_1+T\gamma,\dots,x_m+T\gamma$ and let $\mathcal{F}\rho^2\in
L^1(\R)$. For $j=1,\dots,m$ let $x_j\in\R\backslash\{0\}$ be distinct and let
$V_0,W_0,W_{x_1}\dots,W_{x_m}$ be independent Brownian motions.
If $\epsilon_n U_n^{5/2}\to0$ and $\epsilon_n U_n^{(2s+5)/2}\to\infty$ as $n\to\infty$, then
\begin{align*}
\frac1{\epsilon_n}\left(\begin{array}{rc}
U_n^{+1/2}&\Delta\hat\sigma^2 \\
U_n^{-1/2}&\Delta\hat\gamma^{\phantom{2}} \\
U_n^{-3/2}&\Delta\hat\lambda^{\phantom{2}}\\
U_n^{-5/2}&\Delta\hat\mu(0)\\
U_n^{-5/2}&\Delta\hat\mu(x_1)\\
&\vdots\\
U_n^{-5/2}&\Delta\hat\mu(x_m)
\end{array}\right)
\xrightarrow{d}
\left(\begin{array}{rl}
d(0)&\int_0^1u^2w_\sigma^1(u)\mathrm{d}W_0(u)\\
d(0)&\int_0^1u^2w_\gamma^1(u)\mathrm{d}V_0(u)\\
d(0)&\int_0^1u^2w_\lambda^1(u)\mathrm{d}W_0(u)\\
d(0)&\int_0^1u^2w_0(u)\mathrm{d}W_{0}(u)/(2\pi)\\
d(x_1)&\int_0^1u^2w_\mu^1(u)\mathrm{d}W_{x_1}(u)/(2\pi)\\
&\vdots\\
d(x_n)&\int_0^1u^2w_\mu^1(u)\mathrm{d}W_{x_m}(u)/(2\pi)
\end{array}\right)
\end{align*}
as $n\to\infty$,
where $d(x):=2\sqrt{\pi}\rho(x+T\gamma)\exp(T(\lambda-\gamma))/T$.
\end{theorem}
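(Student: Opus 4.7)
The plan is to decompose each estimation error into approximation and stochastic parts as in \eqref{error_sigma}, reduce the stochastic part to its linearization $\mathcal{L}_{\epsilon,U}$, and then identify the joint law of the resulting linear Wiener integrals. Since each linearized error is a deterministic linear functional of the Gaussian white noise $dW$, the prelimit vector is already exactly Gaussian, and joint Gaussianity of the limit is automatic: only the limiting covariance matrix must be computed, and this is where the independent Brownian motions $V_0,W_0,W_{x_1},\ldots,W_{x_n}$ arise.

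The first step is to discard the approximation and remainder terms. The smoothness class $\mathcal{G}_s$ together with the order-$s$ property \eqref{order s} gives $\lvert\int \mathcal{F}\mu(Uu)w(u)\,du\rvert\lesssim U^{-s}$, and the growth condition $\epsilon U(\epsilon)^{(2s+5)/2}\to\infty$ (which is binding for $\hat\mu$ and strictly stronger than what is needed for $\hat\sigma^2,\hat\gamma,\hat\lambda$) forces all scaled approximation terms to vanish. The nonlinear remainder $\mathcal{R}_{\epsilon,U}$ is controlled by the elementary inequality $|\log(1+z)-z|\le C|z|^2$ valid for $|z|\le 1/2$, applied off a negligible trimming event: combining this with Proposition~\ref{sup L} for $\sigma=0$ gives $\sup_u|\mathcal{R}_{\epsilon,U}|\lesssim \epsilon^2 U^4 \log U$ in $L^q$, and multiplying by the scalings from \eqref{hat sigma}--\eqref{hat mu} produces a factor $\epsilon U(\epsilon)^{5/2}\log U(\epsilon)\to 0$ under the hypothesis $\epsilon U(\epsilon)^{5/2}\to 0$.

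It then suffices to pass to the limit in the linearized stochastic errors. By Fubini each scaled linearized error is a single Wiener integral $\int K_{\epsilon,U}(y)\,dW(y)$ whose kernel is built from \eqref{eqLinearization} and the corresponding weight function; for $\hat\mu(0)$ one must first expand \eqref{hat mu} and collect the additional contributions produced by the linearized forms of $\Delta\hat\sigma^2,\Delta\hat\gamma,\Delta\hat\lambda$, and this algebraic manipulation is exactly what replaces $w_\mu^1$ by the effective weight $w_0$ in the statement. For $\sigma=0$ the denominator $\exp(T\psi(Uu))$ factors, up to $o(1)$, as $e^{T(\gamma-\lambda)}e^{iT\gamma Uu}$, and $iUu(1+iUu)$ contributes $-U^2u^2$ at leading order on the relevant weight supports; absorbing the phase by the translation $z=y-T\gamma$ (respectively $z=y-x_j-T\gamma$ for $\hat\mu(x_j)$) isolates the $u^2$ factor and the constant $d(x)$ appearing in the stated limit.

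The main obstacle is the identification of the limiting covariance matrix, and in particular the asymptotic decoupling. After translation, the squared $L^2$-norm of each kernel is a Plancherel-type integral $\int|\widehat{u^2 w}(Uz)|^2\delta(z+x+T\gamma)^2\,dz$ which, by continuity of $\delta$ at $x+T\gamma$ and Plancherel, converges to $\delta(x+T\gamma)^2\int u^4 w(u)^2\,du/U$, yielding the claimed variances once the prefactors are collected into $d(x)$. Cross-covariances come in two families: between real- and imaginary-part kernels of the same $\mathcal{L}_{\epsilon,U}$, which separate $\hat\gamma$ (driven by $V_0$) from the group $\hat\sigma^2,\hat\lambda,\hat\mu(0)$ (all driven by $W_0$); and between kernels concentrated at distinct translations $T\gamma,x_1+T\gamma,\ldots,x_n+T\gamma$, which separate $\hat\mu(x_j)$ from the zero-translation group and from each other. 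In both families the cross-covariance reduces to an oscillatory integral against $\delta^2$ that vanishes as $U\to\infty$ by Riemann--Lebesgue, justified by the hypothesis $\mathcal{F}\delta^2\in L^1(\mathbbm{R})$. Joint Gaussianity of the prelimit vector together with convergence of the covariance matrix then gives convergence in distribution to the claimed block-Gaussian vector.
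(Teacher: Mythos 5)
Your overall route coincides with the paper's: decompose via \eqref{error_sigma}--\eqref{error_mu}, kill the approximation errors by the undersmoothing condition, reduce to the linearization $\mathcal{L}_{\epsilon,U}$, exploit exact Gaussianity of the prelimit vector, and compute the limiting covariance by Plancherel/approximate-identity arguments with the decoupling of $\hat\gamma$ from the $W_0$-group coming from the vanishing of the non-conjugated second moments and the decoupling of the $\hat\mu(x_j)$ coming from the escape of Fourier mass under the translations $U(\theta_j-\theta_k)$ (the paper's Lemma~\ref{linearized_0}). The identification of the effective weight $w_0$ at $x=0$ and the use of the Riemann--Lebesgue lemma to discard the lower-order terms of \eqref{error_mu} for $x\neq0$ are also as in the paper.

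There is, however, one step that fails as written: your control of the remainder term. You bound $\int_0^1|\mathcal{R}_{\epsilon,U}(u)|\,\mathrm{d}u$ by $\sup_u|\mathcal{R}_{\epsilon,U}(u)|\lesssim\sup_u|\mathcal{L}_{\epsilon,U}(u)|^2$ and then invoke Proposition~\ref{sup L}, which for $\sigma=0$ yields the rate $\epsilon^2U^4\log U$ and hence, after rescaling by $\epsilon^{-1}U^{-3/2}$, the factor $\epsilon U^{5/2}\log U$. This does \emph{not} tend to zero under the hypothesis $\epsilon U(\epsilon)^{5/2}\to0$: for instance $U(\epsilon)=\epsilon^{-2/5}(\log\epsilon^{-1})^{-1/5}$ satisfies both hypotheses of the theorem (for $s$ large enough) while $\epsilon U(\epsilon)^{5/2}\log U(\epsilon)\to\infty$. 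The paper (Lemmas~\ref{linearization_>0} and~\ref{linearization_0}) avoids this loss by using Proposition~\ref{sup L} only to guarantee that, on an event of probability tending to one, the quadratic bound $|\log_{\ge\kappa^U(u)}(1+T\mathcal{L}_{\epsilon,U}(u))-T\mathcal{L}_{\epsilon,U}(u)|\le MT^2|\mathcal{L}_{\epsilon,U}(u)|^2$ holds uniformly in $u$, and then bounding $\mathbb{E}\bigl[\int_0^1|\mathcal{L}_{\epsilon,U}(u)|^2\mathrm{d}u\bigr]$ pointwise via the It\^o isometry, which gives $\lesssim\epsilon^2(U^2+U^4)$ with no logarithmic factor; the rescaled remainder is then $O(\epsilon U^{5/2})\to0$ exactly as required. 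You should replace the sup-norm bound on the integrated remainder by this $L^1$-in-$u$ bound; with that repair the argument goes through.
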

\begin{remark}
  The theorem is formulated in terms of the exponentially weighted jump density $\mu(x)=e^x\nu(x)$. By the continuous mapping theorem results on $\mu$ can be reformulated in terms of~$\nu$ by multiplying with $e^{-x_j}$ in the respective lines.
\end{remark}
\begin{proof}
We write $\Delta\hat\gamma$, $\Delta\hat\lambda$ and $\Delta\hat\mu(x)$ similarly as in \eqref{error_sigma} for $\Delta\hat\sigma^2$:
\begin{align}
\Delta\hat{\gamma} &=-\Delta\hat{\sigma}^2 +\frac{2}{U}\int_{0}^{1}\mathrm{Im}(\mathcal{F}\mu(Uu))w_{\gamma}^{1}(u)\mathrm{d}u 
+ \frac{2}{U}\int_{0}^{1} \mathrm{Im}(\Delta\psi_{n}(Uu) ) w_{\gamma}^{1}(u)\mathrm{d}u,\label{error_gamma}\displaybreak[0]\\
\Delta\hat{\lambda} &=\frac{\Delta\hat{\sigma}^2}{2}+\Delta\hat\gamma -2\int_{0}^{1}\mathrm{Re}(\mathcal{F}\mu(Uu))w_{\lambda}^{1}(u)\mathrm{d}u 
- 2\int_{0}^{1} \mathrm{Re}(\Delta\psi_{n}(Uu) ) w_{\lambda}^{1}(u)\mathrm{d}u,\label{error_lambda}\displaybreak[0]\\
\Delta\hat\mu(x)
&=U\mathcal{F}^{-1}\left[\Delta\psi_{n}(Uu)w_\mu^{1}(u)\right](Ux)\nonumber\\
&\phantom{=}
+\frac{\Delta\hat\sigma^2}{2} U\mathcal{F}^{-1}\left[(Uu-i)^2w_\mu^{1}(u)\right](Ux)
-i\Delta\hat\gamma U\mathcal{F}^{-1}\left[(Uu-i)w_\mu^{1}(u)\right](Ux) \nonumber\\
&\phantom{=}+\Delta\hat\lambda U\mathcal{F}^{-1}\left[w_\mu^{1}(u)\right](Ux) -U\mathcal{F}^{-1}\left[(1-w_\mu^{1}(u))\mathcal{F}\mu(Uu)\right](Ux). \label{error_mu}
\end{align}
In \eqref{error_gamma} we can substitute $\Delta\hat\sigma^2$ using \eqref{error_sigma} and obtain two error terms involving $\mathcal{F}\mu$ and two error terms involving $\Delta\psi_{n}$. By similar substitutions in \eqref{error_lambda} and \eqref{error_mu} we see that all error terms either involve $\mathcal{F}\mu$ or $\Delta\psi_{n}$, which we will call approximation errors and stochastic errors, respectively.

The undersmoothing $\epsilon_n U_n^{(2s+5)/2}\to\infty$ is equivalent to $U_n^{-(s+3)}=o(\epsilon_n U_n^{-1/2})$. The approximation error of $\hat\sigma^2$ decays by~\eqref{bias_sigma} below as $U_n^{-(s+3)}$ and thus is asymptotically negligible. The approximation errors
$2U^{-1}\int_{0}^{1}\mathrm{Im}(\mathcal{F}\mu(Uu))w_{\gamma}^{1}(u)\mathrm{d}u$ of $\hat\gamma$,
$2\int_{0}^{1}\mathrm{Re}(\mathcal{F}\mu(Uu))w_{\lambda}^{1}(u)\mathrm{d}u$ of $\hat\lambda$
and
$U\mathcal{F}^{-1}\left[(1-w_\mu^{1}(u))\mathcal{F}\mu(Uu)\right](Ux)$ of $\hat\mu(x)$ can be bounded similarly as done in \eqref{bias_gamma}, \eqref{bias_lambda} and \eqref{bias_mu} below and are asymptotically negligible, too.
Since $\hat\sigma^2$ converges with a faster rate than $\hat\gamma$ and $\hat\gamma$ converges with a faster rate than $\hat\lambda$, the error $\Delta\hat\sigma^2$ vanishes asymptotically in \eqref{error_gamma} and in \eqref{error_lambda} as well as $\Delta\hat\gamma$ is asymptotically negligible in \eqref{error_lambda}.
For $x\neq0$ we can apply the Riemann--Lebesgue lemma to the second, the third and the fourth error term in \eqref{error_mu} and we see that they are of order $o_{\mathbb{P}}(\epsilon_n U_n^{5/2})$. For $x=0$ due to the symmetry of $w_\mu^1$ the third term vanishes asymptotically but the second and the fourth term do not. The error terms of $\hat\mu(x)$ we have to consider are in the case $x\neq0$
\begin{align*}
U\mathcal{F}^{-1}\left[\Delta\psi_{n}(Uu)w_\mu^{1}(u)\right](Ux)
=\frac{U}{2\pi}2\int_0^1w_\mu^{1}(u)\mathrm{Re}\left(\Delta\psi_{n}(Uu)e^{-iUux}\right)\mathrm{d}u
\end{align*}
and in the case $x=0$
\begin{align*}
  &\phantom{\;=\;}\mathcal{F}^{-1}\left[\Delta\psi_{n}(Uu)
  )w_\mu^{1}(u)\right](0)
  + \int_{0}^{1} \mathrm{Re}(\Delta\psi_{n}(Uu) )
  w_{\sigma}^{1}(u)\mathrm{d}u\, \mathcal{F}^{-1}\left[u^2w_\sigma^{1}(u)\right](0)\\
  &\phantom{\;=\;}- 2\int_{0}^{1} \mathrm{Re}(\Delta\psi_{n}(Uu) )
  w_{\lambda}^{1}(u)\mathrm{d}u\, \mathcal{F}^{-1}\left[w_\mu^{1}(u)\right](0)\\
  &=\frac{1}{2\pi} 2\int_{0}^{1} \mathrm{Re}(\Delta\psi_{n}(Uu) )
  w_0(u)\mathrm{d}u.
\end{align*}
By assumption~\eqref{order s} on the order of the weight functions, $w_\sigma^1$, $w_\gamma^1$, $w_\lambda^1$ and $w_\mu^1$ are continuous and bounded, especially they are Riemann--integrable and in $L^\infty([-1,1])$.
As the main technical step, Lemma~\ref{linearized_0} shows the convergence of the linearized stochastic errors. The remainder terms are asymptotically negligible by Lemma~\ref{linearization_0}.\qed
\end{proof}

Next we consider the case $\sigma>0$.
Let $\rho$ be in $L^\infty(\R)$ and $\|\rho\|_{L^2(\R)}>0$. We set
\begin{align}
  d&:=\sqrt{2}\|\rho\|_{L^2(\R)} 
\exp(T(\lambda-\gamma-\sigma^2/2))T^{-2}\sigma^{-2}\label{d}
\end{align}
and define by
$W_{n,U}+iV_{n,U}:=2d^{-1}\int_0^1\mathcal{L}_{n,U}(u)\mathrm{d}u$ the real--valued random variables $W_{n,U}$ and $V_{n,U}$.
By Lemma~\ref{linearized_>0} below
\begin{align}
\frac1{\epsilon_n\exp(T\sigma^2U^2/2)}\left(\begin{array}{c}W_{n,U}\\V_{n,U}\end{array}\right) &\xrightarrow{d}
\left(\begin{array}{c}W\\V\end{array}\right) \label{W and V}
\end{align}
as $U\to\infty$,
where $W$ and $V$ are independent standard normal random variables.

The following theorem treats the stochastic errors in the case of positive volatility.
Since the theorem contains no statement on the approximation errors,
the condition~\eqref{order s} on the order of the weight functions may be
omitted.
\begin{theorem}\label{stochastic errors}
Let $\sigma>0$ and $\rho\in L^\infty(\R)$.
Assume for the cut--off value $U_n\to\infty$ and $\epsilon_n U_n^2\exp(T\sigma^2U_n^2/2) \to 0$ as $n\to \infty$.
Let $w_\sigma^1,w_\gamma^1,w_\lambda^1,w_\mu^1:[0,1]\to\R$ be Riemann--integrable, in $L^\infty([0,1])$ and continuous at one.
Then for $x\in\R$
\begin{align*}
  &\frac1{\epsilon_n \exp(T\sigma^2U_n^2/2)}
  \left(\begin{array}{r}
  2\int_0^1\mathrm{Re}(\Delta\psi_{n}(U_n u))w_\sigma^1(u)\mathrm{d}u  -d\,w_\sigma^1(1)W_{n,U_n}\phantom{(x)/(2\pi)}\\
  2\int_0^1\mathrm{Im}(\Delta\psi_{n}(U_n u))w_\gamma^1(u)\mathrm{d}u  -d\,w_\gamma^1(1)\,\,V_{n,U_n }\phantom{(x)/(2\pi)}\\
  2\int_0^1\mathrm{Re}(\Delta\psi_{n}(U_n u))w_\lambda^1(u)\mathrm{d}u  -d\,w_\lambda^1(1)W_{n,U_n }\phantom{(x)/(2\pi)}\\
  \mathcal{F}^{-1}\left[\Delta\psi_{n}(U_n u)
  w_\mu^{1}(u)\right](U_n x) -d\,w_\mu^1(1)\:Z_{n,U_n }(x)/(2\pi)
  \end{array}\right)\xrightarrow{\mathbb{P}} 0
\end{align*}
as $n\to\infty$, where $Z_{n,U}(x):=\cos(Ux)W_{n,U}+\sin(Ux)V_{n,U}.$
\end{theorem}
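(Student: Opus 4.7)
The plan is to show that each integral against $\Delta\psi_\epsilon$ is, after replacing $\Delta\psi_\epsilon(Uu)$ by its linearization $\mathcal{L}_{\epsilon,U}(u)$, asymptotically determined by the value of the weight at the endpoint $u=1$. The driving mechanism is the factorization $\mathcal{L}_{\epsilon,U}(u)=\epsilon A(u)X(Uu)$, where $X(v):=\int_{-\infty}^{\infty}e^{ivx}\delta(x)\mathrm{d}W(x)$ has stationary covariance $\mathbb{E}[X(s)\overline{X(t)}]=\mathcal{F}(\delta^2)(s-t)$ and the deterministic prefactor $A(u):=iUu(1+iUu)/(T(1+iUu(1+iUu)\mathcal{F}\mathcal{O}(Uu)))$ satisfies $|A(u)|\asymp U^2u^2\exp(T\sigma^2U^2u^2/2)$. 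The latter concentrates the mass of the integrand on an $O(1/U^2)$ neighborhood of $u=1$, so only the boundary value $w(1)$ of each weight survives asymptotically.

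The first step is to pass from $\Delta\psi_\epsilon$ to $\mathcal{L}_{\epsilon,U}$ via the decomposition \eqref{eqRemainder}. Proposition \ref{sup L} and the standing hypothesis $\epsilon U^2\sqrt{\log U}\exp(T\sigma^2U^2/2)\to 0$ make $\sup_{u\in[-1,1]}|\mathcal{L}_{\epsilon,U}(u)|\to 0$ in probability. On the event where this supremum is below $1/2$, the trimmed logarithm agrees with the ordinary one, and the Taylor bound $|\log(1+z)-z|\lesssim|z|^2$ yields $|\mathcal{R}_{\epsilon,U}(u)|\lesssim|\mathcal{L}_{\epsilon,U}(u)|^2$ uniformly in $u$. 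The resulting remainder contribution to any of the four coordinates is controlled by $\|w\|_\infty\sup_u|\mathcal{L}_{\epsilon,U}|\cdot\int_0^1|\mathcal{L}_{\epsilon,U}(u)|\mathrm{d}u$; combined with the a priori bound $\int_0^1|\mathcal{L}_{\epsilon,U}|\mathrm{d}u=O_{\mathbb{Q}}(\epsilon\exp(T\sigma^2U^2/2))$ obtained by Cauchy--Schwarz and a Laplace estimate on the second moment, this is negligible compared to the normalization $\epsilon\exp(T\sigma^2U^2/2)$.

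The heart of the argument is a Laplace-type replacement
\[\int_0^1 h(u)\mathcal{L}_{\epsilon,U}(u)\mathrm{d}u=h(1)\int_0^1\mathcal{L}_{\epsilon,U}(u)\mathrm{d}u+o_{\mathbb{Q}}(\epsilon\exp(T\sigma^2U^2/2))\]
valid for any bounded, Riemann-integrable $h:[0,1]\to\mathbbm{C}$ continuous at $1$. I would prove it by estimating the second moment of the error $\int_0^1(h(u)-h(1))\mathcal{L}_{\epsilon,U}(u)\mathrm{d}u$ using the covariance $\mathbb{E}[\mathcal{L}_{\epsilon,U}(u)\overline{\mathcal{L}_{\epsilon,U}(v)}]=\epsilon^2 A(u)\overline{A(v)}\mathcal{F}(\delta^2)(U(u-v))$ and splitting $[0,1]^2$ into the corner $\{|1-u|\vee|1-v|\le\eta_U\}$ and its complement. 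On the corner the factor $|h(u)-h(1)|$ is uniformly $o(1)$ by continuity at $1$; on the complement the product $|A(u)\overline{A(v)}|$ is suppressed by $\exp(-T\sigma^2U^2\eta_U)$ relative to its peak at $(1,1)$. A choice such as $\eta_U=U^{-3/2}$ delivers both $\eta_U\to 0$ and $U^2\eta_U\to\infty$ and makes both contributions $o(\epsilon^2\exp(T\sigma^2U^2))$. Applying the replacement with $h\in\{w_\sigma^1,w_\gamma^1,w_\lambda^1\}$ and using the identity $W_{\epsilon,U}+iV_{\epsilon,U}=2d^{-1}\int_0^1\mathcal{L}_{\epsilon,U}(u)\mathrm{d}u$ settles the first three coordinates after taking real or imaginary parts.

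For the Fourier inversion coordinate I would first exploit the symmetries $\Delta\psi_\epsilon(-u)=\overline{\Delta\psi_\epsilon(u)}$ and $w_\mu^1(-u)=w_\mu^1(u)$ to rewrite
\[\mathcal{F}^{-1}[\Delta\psi_\epsilon(Uu)w_\mu^1(u)](Ux)=\frac{1}{\pi}\mathrm{Re}\int_0^1 e^{-iUxu}\Delta\psi_\epsilon(Uu)w_\mu^1(u)\mathrm{d}u,\]
and then apply the replacement lemma to the $x$-dependent weight $h_x(u):=e^{-iUxu}w_\mu^1(u)$. The bound $|h_x(u)-h_x(1)|\le Ux\|w_\mu^1\|_\infty|1-u|+|w_\mu^1(u)-w_\mu^1(1)|$ shows that $h_x$ has modulus of continuity $o(1)$ on $\{|1-u|\le\eta_U\}$ as long as $Ux\eta_U\to 0$, which is compatible with $U^2\eta_U\to\infty$ (take $\eta_U=U^{-3/2}$ for the fixed $x$ in the theorem). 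Taking the real part of $e^{-iUx}w_\mu^1(1)\cdot d(W_{\epsilon,U}+iV_{\epsilon,U})/2$ produces exactly $dw_\mu^1(1)Z_{\epsilon,U}(x)/2$, and division by $\pi$ yields the stated limit. The main obstacle lies in this last step: since $h_x$ oscillates at frequency $U$ across the concentration scale $1/U^2$, its $U$-dependent modulus of continuity must be delicately balanced against the exponential suppression of $|A|^2$ away from $u=1$, and the second-moment computation requires a Laplace estimate of the integral of $|A(u)A(v)|\mathcal{F}(\delta^2)(U(u-v))$ in the corner region that is tight enough to match the normalization $\epsilon\exp(T\sigma^2U^2/2)$.
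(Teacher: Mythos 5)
Your proposal is correct and follows essentially the same route as the paper: linearize $\Delta\psi_\epsilon$ and kill the remainder via Proposition~\ref{sup L} together with a Taylor bound on the trimmed logarithm (the paper's Lemma~\ref{linearization_>0}), then show by a second-moment computation with the covariance kernel $A(u)\overline{A(v)}\mathcal{F}(\delta^2)(U(u-v))$ that the Gaussian mass concentrates near $u=1$, so that $\int_0^1 h(u)\mathcal{L}_{\epsilon,U}(u)\mathrm{d}u$ may be replaced by $h(1)\int_0^1\mathcal{L}_{\epsilon,U}(u)\mathrm{d}u$ — this is exactly the content of Lemmas~\ref{covariances} and~\ref{linearized_no_convergence} applied with $w_1\equiv1$, including the treatment of the oscillating weight $e^{-iUux}$ for the $\mu$-coordinate. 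The only differences are cosmetic: you use a single window $\eta_U=U^{-3/2}$ in place of the paper's $\delta/U$ window followed by $\delta\to0$, and the "obstacle" you flag at the end is already resolved by that choice, since $U\eta_U\to0$ while $U^2\eta_U\to\infty$.
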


\begin{proof}
The main technical step is provided by Lemma~\ref{linearized_no_convergence}, which treats the convergence of the linearized stochastic errors. The remainder terms are asymptotically negligible by Lemma~\ref{linearization_>0}.
To see the first line we set $x_1=x_2=0$, $w_1 \equiv 1$ and $w_2=w_\sigma^1$ in Lemma~\ref{linearized_no_convergence} and $w_U=w_\sigma^1$ in Lemma~\ref{linearization_>0}. The second and third line follow analogously. In order to derive the last line we observe
\begin{align*}
  \mathcal{F}^{-1}\left[\Delta\psi_{n}(Uu)
  w_\mu^{1}(u)\right](Ux)
  =2\int_0^1\mathrm{Re}(\Delta\psi_{n}(Uu)e^{-iUux})w_\mu^1(u)\mathrm{d}u/(2\pi)
\end{align*}
and apply Lemma~\ref{linearized_no_convergence} with $x_1=0$, $x_2=x$, $w_1 \equiv 1$ and $w_2=w_\mu^1$, . The remainder term vanishes by setting $w_U(u)=w_\mu^1(u)e^{-iUux}$ in Lemma~\ref{linearization_>0}.\qed
\end{proof}

The assumption $\mathcal{T}\in \mathcal{G}_s(R,\sigma_{\max})$ restricts $\sigma$ to the interval $[0,\sigma_{\max}]$.
The condition $\epsilon_n U_n^2\exp(T\sigma^2U_n^2/2) \to 0$ is especially fulfilled if $U_n\le\bar\sigma^{-1}(2\log(\epsilon_n^{-1})/T)^{1/2}$ for any $\bar\sigma>\sigma_{\max}$. For the estimation it suffices to know some upper bound $\sigma_{\max}$ of $\sigma$. The theorem shows that regardless whether one undersmoothes or not the stochastic errors converge with appropriate scaling to normal random variables.
For the statement on asymptotic normality we have to undersmooth and further knowledge on the volatility is necessary.

In many situations the volatility $\sigma$ is known or can be estimated easily.
The volatility is preserved under a change to an equivalent measure so that it
is the same for the risk--neutral and the real--world measure even if the price
process is only assumed to be a semimartingale. Then one of the methods for
volatility estimation from high frequency data in the presence of jumps can be
used to estimate the volatility.
Cont and Tankov~\cite{ContTankov} also need to fix the volatility for their calibration method of exponential L\'evy models in advance since their method chooses only among measures of L\'evy processes equivalent to a prior measure. They suggest using historical data or an earlier calibrated model for the choice of the prior and thus also of the volatility.
In the following we will assume either that the volatility $\sigma$ is known or that we have a sufficiently good estimator of the volatility.
To control the remainder term we choose $U_n$ such that $\epsilon_n U_n^2\exp(T\sigma^2U_n^2/2) \to 0$ as $n\to\infty$.
We also assume the undersmoothing condition
$\epsilon_n U_n^{s+1}\exp(T\sigma^2U_n^2/2) \to \infty$ as $n\to\infty$. A
smoothness parameter $s\ge2$ is implicitly assumed so that both conditions can
be satisfied simultaneously.
A possible choice of $U_n$ is
\begin{equation}\label{cut-off}
U_n:=\left(\frac{2}{T\sigma^2} \log\left(\frac{\epsilon_n^{-1}}{{\log(\epsilon_n^{-1})}^\alpha}\right)\right)^{1/2},
\end{equation}
where $\alpha\in(1,(s+1)/2)$. Then it holds
\begin{align*}
\epsilon_n U_n^{\beta}\exp(T\sigma^2 U_n^2/2)
\to \left\{\begin{array}{cc}\infty & \beta>2\alpha\\ \left(\frac{2}{T\sigma^2}\right)^\alpha & \beta=2\alpha  \\0 & \beta<2\alpha\end{array}\right.
\end{align*}
as $n\to\infty$.
Especially the term diverges for $\beta=s+1$ and converges to zero for $\beta=2$ so that both conditions on $U_n$ are fulfilled.
Next we state the joint asymptotic normality result for the severely ill--posed
case of positive volatility.

\begin{theorem}\label{asymptotic distribution >0}
Let $\sigma>0$ and $\rho\in L^\infty(\R)$. Let the cut--off value $U_n$ be
chosen such that $\epsilon_n U_n^2\exp(T\sigma^2U_n^2/2) \to 0$ and $\epsilon_n
U_n^{s+1}\exp(T\sigma^2U_n^2/2)\to\infty$ as $n\to \infty$.
Then
\begin{align*}
  &\frac1{\epsilon_n \exp(T\sigma^2U_n^2/2)}
  \left(\begin{array}{rcl}
  U_n^2\Delta\hat\sigma^2& - & d \,w_\sigma^1(1)W_{n,U_n}\\
  U_n\Delta\hat\gamma^{\phantom{2}} & - & d \,w_\gamma^1(1)V_{n,U_n}\\
  \Delta\hat\lambda^{\phantom{2}}  & - & d \,w_\lambda^1(1)W_{n,U_n}\\
  U_n^{-1}\Delta\hat\mu(0) & - & d \,w_0(1)W_{n,U_n}/(2\pi)\\
  U_n^{-1}\Delta\hat\mu(x) & - & d \,w_\mu^1(1)Z_{n,U_n}(x)/(2\pi)
  \end{array}\right)\xrightarrow{\mathbb{P}} 0
\end{align*}
as $n\to\infty$, where
$x\in\R\backslash\{0\}$, $Z_{n,U}(x):=\cos(Ux)W_{n,U}+\sin(Ux)V_{n,U}$ and $d$ is given by \eqref{d}.
\end{theorem}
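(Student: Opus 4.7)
My strategy is to reduce the statement to Theorem~\ref{stochastic errors} by splitting each estimation error into a deterministic approximation error plus one of the stochastic integrals already controlled by that theorem, and then to verify that every approximation error is $o(\epsilon e^{T\sigma^2U^2/2})$ in the scaling of the row in question; the undersmoothing condition $\epsilon U^{s+1}e^{T\sigma^2U^2/2}\to\infty$ will deliver this.

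For $\Delta\hat\sigma^2,\Delta\hat\gamma,\Delta\hat\lambda$ I would substitute the L\'evy--Khintchine form~\eqref{psi} into the definitions~\eqref{hat sigma}--\eqref{hat lambda} and apply the weight conditions~\eqref{weight}, obtaining the decomposition~\eqref{error_sigma} for $\hat\sigma^2$ and entirely analogous ones for $\hat\gamma,\hat\lambda$, in which the approximation parts are weighted integrals of $\mathcal{F}\mu$. The order-$s$ assumption~\eqref{order s} combined with $\mathcal{T}\in\mathcal{G}_s(R,\sigma_{\max})$ bounds these errors by $O(U^{-s-2})$, $O(U^{-s-1})$, $O(U^{-s})$; after the scalings $U^2,U,1$ the undersmoothing condition renders all of them $o(\epsilon e^{T\sigma^2U^2/2})$. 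Theorem~\ref{stochastic errors} applied to the remaining stochastic integrals then gives the first three rows, once one observes that the cross-terms $-\Delta\hat\sigma^2$ inside $\Delta\hat\gamma$ and $\tfrac12\Delta\hat\sigma^2+\Delta\hat\gamma$ inside $\Delta\hat\lambda$ are of size $O(\epsilon e^{T\sigma^2U^2/2}/U)$, hence negligible in the required scaling.

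For $\hat\mu$, I would use the rearrangement $\mathcal{F}\mu(u)=\psi(u)+\tfrac{\sigma^2}{2}(u-i)^2-i\gamma(u-i)+\lambda$ of~\eqref{psi} inside~\eqref{hat mu} to obtain
\begin{align*}
\Delta\hat\mu(x)=\mathcal{F}^{-1}[\Delta\psi_\epsilon\,w_\mu^U](x)+R(x)+A_\mu(x),
\end{align*}
with $A_\mu(x)=\mathcal{F}^{-1}[\mathcal{F}\mu\,(w_\mu^U-1)](x)=O(U^{-s})$ via~\eqref{order s} and
\begin{align*}
R(x):=\mathcal{F}^{-1}\!\left[\bigl(\tfrac{\Delta\hat\sigma^2}{2}(u-i)^2-i\Delta\hat\gamma(u-i)+\Delta\hat\lambda\bigr)w_\mu^U\right]\!(x).
\end{align*}
For $x\neq 0$, the scaling $\mathcal{F}^{-1}[p(u)w_\mu^U](x)=U^{\deg p+1}\,\mathcal{F}^{-1}[p(Uv)\cdot U^{-\deg p}w_\mu^1(v)](Ux)$ together with sufficient smoothness of $w_\mu^1$ forces polynomial decay of this quantity at $Ux\to\infty$, so that $R(x)=o(U\,\epsilon e^{T\sigma^2U^2/2})$; Theorem~\ref{stochastic errors} applied to the leading term and division by $U$ then give the $\hat\mu(x)$ row.

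The subtle case, and the main obstacle, is $x=0$, where the Fourier inverses reduce to plain integrals and $R(0)$ is not small. Symmetry of $w_\mu^1,w_\sigma^1,w_\lambda^1$ (and antisymmetry of $w_\gamma^1$) kills all odd-in-$u$ terms, leaving after expansion
\begin{align*}
2\pi R(0)=\tfrac{\Delta\hat\sigma^2}{2}\!\int\!(u^2-1)w_\mu^U(u)\,du+(\Delta\hat\lambda-\Delta\hat\gamma)\!\int\! w_\mu^U(u)\,du.
\end{align*}
After the substitution $v=u/U$ the dominant contributions are $\tfrac12 U^3\Delta\hat\sigma^2\!\int_{-1}^1 v^2 w_\mu^1\,dv$ and $U\Delta\hat\lambda\!\int_{-1}^1 w_\mu^1\,dv$, both of size $U\,\epsilon e^{T\sigma^2U^2/2}$, while the $U\Delta\hat\gamma$ and $U\Delta\hat\sigma^2$ pieces are smaller by a factor $1/U$. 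Plugging in the leading approximations $U^2\Delta\hat\sigma^2\approx d\,w_\sigma^1(1)W_{\epsilon,U}$ and $\Delta\hat\lambda\approx d\,w_\lambda^1(1)W_{\epsilon,U}$ from the first part and adding the main piece $(2\pi)^{-1}\mathcal{F}^{-1}[\Delta\psi_\epsilon w_\mu^U](0)\approx(2\pi)^{-1}Ud\,w_\mu^1(1)W_{\epsilon,U}$ collapses everything at the critical order into
\begin{align*}
(2\pi)^{-1}Ud\,W_{\epsilon,U}\!\left[w_\mu^1(1)+\tfrac{1}{2} w_\sigma^1(1)\!\int_{-1}^1\! v^2 w_\mu^1\,dv - w_\lambda^1(1)\!\int_{-1}^1\! w_\mu^1\,dv\right]=(2\pi)^{-1}Ud\,w_0(1)W_{\epsilon,U},
\end{align*}
which, after division by $U$, is precisely the target. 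The careful bookkeeping at $x=0$ producing the effective weight $w_0$, together with uniform control of the approximation errors through~\eqref{order s}, is the only delicate point; everything else is mechanical given Theorem~\ref{stochastic errors}.
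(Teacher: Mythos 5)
Your proposal follows essentially the same route as the paper's proof: the decompositions \eqref{error_sigma}--\eqref{error_mu}, the bounds \eqref{bias_sigma}--\eqref{bias_mu} under undersmoothing for the approximation errors, the rate hierarchy killing the cross-terms, Theorem~\ref{stochastic errors} for the leading stochastic terms, Riemann--Lebesgue-type decay of $\mathcal{F}^{-1}[p(u)w_\mu^1(u)](Ux)$ for $x\neq0$, and the collection of the surviving terms at $x=0$ into the effective weight $w_0(1)$ (the paper does this by rewriting the three stochastic integrals as a single integral against $w_0$ and invoking Lemmas~\ref{linearized_no_convergence} and~\ref{linearization_>0}, which is equivalent to your substitution of the already-established leading approximations). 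The argument is correct and matches the paper's.
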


\begin{proof}
  The undersmoothing condition $\epsilon_n
U_n^{s+1}\exp(T\sigma^2U_n^2/2)\to\infty$ yields for the cut--off value
  $U_n^{-(s+3)}=o(\epsilon_n U_n^{-2}\exp(T\sigma^2U_n^2/2))$ and thus the
approximation error of~$\hat\sigma^2$ vanishes. A similar reasoning applies to
the approximation errors of the other estimators. Since $\hat\sigma^2$ converges
with a faster rate than $\hat\gamma$ and $\hat \gamma$ with a faster rate than
$\hat\lambda$ the leading stochastic error terms are given in
Theorem~\ref{stochastic errors} and the convergence of the first three lines
follows by this theorem.
  For $x\neq0$ all stochastic errors in \eqref{error_mu} are negligible except the first one. We obtain the convergence in the last line by Theorem~\ref{stochastic errors}. We observe that $\mathcal{F}^{-1}[u w_\mu^1(u)](0)=0$, since $w_\mu^1$ is symmetric. For $x=0$ the relevant stochastic error terms are
  \begin{align*}
  &\phantom{\;=\;}\mathcal{F}^{-1}\left[\Delta\psi_{n}(Uu)
  w_\mu^{1}(u)\right](0)
  + \int_{0}^{1} \mathrm{Re}(\Delta\psi_{n}(Uu) )
  w_{\sigma}^{1}(u)\mathrm{d}u\, \mathcal{F}^{-1}\left[u^2w_\mu^{1}(u)\right](0)\\
  &\phantom{\;=\;}- 2\int_{0}^{1} \mathrm{Re}(\Delta\psi_{n}(Uu) )
  w_{\lambda}^{1}(u)\mathrm{d}u\, \mathcal{F}^{-1}\left[w_\mu^{1}(u)\right](0)\\
  &=\frac{1}{2\pi} 2\int_{0}^{1} \mathrm{Re}(\Delta\psi_{n}(Uu) )
  w_0(u)\mathrm{d}u.
  \end{align*}
  We apply Lemma~\ref{linearized_no_convergence} with $x_1=x_2=0$, $w_1\equiv1$ and $w_2=w_0$ to this term.
  The remainder term is asymptotically negligible by Lemma~\ref{linearization_>0}.
  This shows the convergence in the next to last line.\qed
\end{proof}

\subsection{Discussion of the results}

Theorems~\ref{asymptotic distribution =0} and~\ref{asymptotic distribution >0}
include the asymptotic distribution of $\hat\sigma^2$, which may be used for
testing the hypotheses $H_0:\sigma=\sigma_0$, see Section~6.2 in~\cite{soehl}.
If $\sigma$ is known, we can set $\hat\sigma^2=\sigma^2$. Then the statements of the theorems hold with $w_\sigma^1$ constant to zero. The estimation method can give negative values for $\hat\sigma^2$, $\hat\lambda$ and $\hat\nu(x)$. By a postprocessing step the estimated values can be corrected to be non--negative.

In Theorem~\ref{asymptotic distribution =0} the noise level $\rho$ enters only
locally into the asymptotic variance, whereas in Theorems~\ref{stochastic
errors}
and~\ref{asymptotic distribution >0} the asymptotic variance depends on the
$L^2$--norm of $\rho$ through the factor $d$. In fact for $\sigma=0$ it is
possible to estimate $\gamma$ and $\lambda$ directly from local properties of
the option function $\mathcal{O}$ at $\gamma T$ as remarked in
\cite{calibration}.
This local dependence on the noise level resembles some similarity to
deconvolution, for instance, to the case of ordinary smooth error densities
\cite{fan} or to the case of symmetric stable error densities whose
characteristic function decreases slower than the characteristic function of the
Cauchy distribution \cite{EsUh1}.
In both cases the density of the observations enters locally into the asymptotic
variance.
For the weight functions the local and global dependence is vice versa. In
Theorem~\ref{asymptotic distribution =0} the weight functions $w_\sigma^1$,
$w_\gamma^1$, $w_\lambda^1$, $w_0^1$ and $w_\mu^1$ enter globally into the
asymptotic variance while in Theorems~\ref{stochastic errors}
and~\ref{asymptotic distribution >0} only the values of the weight functions at
their endpoints appear in the asymptotic variance.

The asymptotic variance depends on the maturity. For positive volatility this dependence is through $d$ in \eqref{d}.
The martingale condition is equivalent to
the
equation $\sigma^2/2+\gamma-\lambda+\int_{-\infty}^{\infty}e^x\nu(x)\mathrm{d}
x=0$ , especially it holds that $\lambda-\gamma-\sigma^2/2\ge0$ with equality if
and only if $\lambda=0$ that is in the Black--Scholes case. In the case of
positive volatility $\sigma$ the asymptotic variance grows exponentially as
$T\to\infty$ if the jump intensity $\lambda$ is positive and it grows quadratic
as $T\to0$.

For $w_\sigma^1(1)$, $w_\gamma^1(1)$, $w_\lambda^1(1)$, $w_\mu^1(1)\in\R\backslash\{0\}$ Theorem~\ref{stochastic errors} describes the asymptotic distribution of the leading stochastic error term of $\hat\sigma^2$, $\hat\gamma$, $\hat\lambda$ and $\hat\mu(x)$, $x\neq0$, i.e., the other stochastic error terms are of smaller order. Theorem~\ref{asymptotic distribution >0} describes the asymptotic distribution of $\hat\mu$. Both theorems are for the case of positive volatility, where the noise in the frequency domain is exponentially heteroscedastic, so that the highest frequency, that is the cut--off frequency~$U$, dominates the stochastic error. Then this cut--off frequency $U$ can be seen in the asymptotic distribution of $\hat\mu$ through the oscillating process $Z_{n,U}$.
The variances in Theorems~\ref{stochastic errors} and~\ref{asymptotic distribution >0} converge by \eqref{W and V} and by the definition of $Z_{n,U}(x)$. If one only considers the stochastic errors of $\hat\sigma^2$, $\hat\gamma$, $\hat\lambda$ and $\hat\mu(0)$, then the covariances converge, too.
But for $x\neq0$ the covariance of the stochastic errors of $\hat\mu(x)$ and of
$\hat\sigma^2$ does not converge. The same holds for the covariance of the
stochastic errors of $\hat\mu(x)$ and $\hat\gamma$ as well as $\hat\mu(x)$ and
$\hat\lambda$. The phenomenon that the covariances do not convergence comes from
the fact that the stochastic error centers more and more at the cut--off
frequency. The sequence of cut--off values has a crucial influence on the
covariance. For estimators of the generalized distribution function of the
L\'evy density this is likely to lead to a similar dependence on the sequence of
cut--off values as observed in \cite{EsUh2005} for deconvolution with
supersmooth errors.

\section{Applications}\label{secConfidence}

\subsection{Construction of confidence intervals and confidence sets}

For $\sigma=0$ we define confidence intervals
\begin{align}\label{intervals sigma=0}
  \begin{array}{rll}
  I_{\gamma,n}:=&[\hat \gamma-\hat s_\gamma\epsilon_n U^{1/2} q_{\alpha/2},&\hat \gamma+\hat s_\gamma\epsilon_n U^{1/2} q_{\alpha/2}],\\
I_{\lambda,n}:=&[\hat \lambda-\hat s_\lambda\epsilon_n U^{3/2} q_{\alpha/2},&\hat \lambda+\hat s_\lambda\epsilon_n U^{3/2} q_{\alpha/2}],\\
I_{\mu(0),n}:=&[\hat \mu(0)-\hat s_{\mu(0)}\epsilon_n U^{5/2} q_{\alpha/2},&\hat \mu(0)+\hat s_{\mu(0)}\epsilon_n U^{5/2} q_{\alpha/2}],\\
I_{\mu(x),n}:=&[\hat \mu(x)-\hat s_{\mu(x)}\epsilon_n U^{5/2} q_{\alpha/2},&\hat \mu(x)+\hat s_{\mu(x)}\epsilon_n U^{5/2} q_{\alpha/2}],
  \end{array}
\end{align}
where $x\in\R\backslash\{0\}$, $q_\alpha$ denotes the $(1-\alpha)$--quantile of
the standard normal distribution and
\begin{align*}
\begin{array}{rl}
  \hat s_\gamma&:=\hat
s(0)\left(\int_0^1u^4w_\gamma^1(u)^2\mathrm{d}u\right)^{1/2},\\
  \hat s_\lambda&:=\hat
s(0)\left(\int_0^1u^4w_\lambda^1(u)^2\mathrm{d}u\right)^{1/2},\\
  \hat s_{\mu(0)}&:=\hat
s(0)\left(\int_0^1u^4w_0(u)^2\mathrm{d}u\right)^{1/2}/(2\pi),\\
  \hat
s_{\mu(x)}&:=\hat s(x)
\left(\int_0^1u^4w_\mu^1(u)^2\mathrm{d}u\right)^{1/2}/(2\pi),
\end{array}
\end{align*}
with $\hat s
(x):=2\sqrt{\pi}\rho(x+T\hat\gamma)\exp(T(\hat\lambda-\hat\gamma))/T$.
We fix some arbitrarily slowly decreasing function $h$ with $h(u)\to0$ as $|u|\to\infty$.
We denote by $\mathcal{H}_s(R,\sigma_{\max})$ the subset of L\'evy triplets in $\mathcal{G}_s(R,\sigma_{\max})$ that satisfy in addition
\begin{align}\label{H}
  \|\mathcal{F}\mu\|_\infty\le R,\quad |\mathcal{F}\mu(u)|\le R\, h(u), \quad \forall u\in\R.
\end{align}
The additional conditions are used to extend the convergence in the theorems to
be uniform over all L\'evy triplets in~$\mathcal{H}_s(R,\sigma_{\max})$, see
Theorem~5.1 in~\cite{soehl}, and to obtain \emph{honest} confidence sets meaning
that the level is achieved uniformly over a class of L\'evy triplets.
\begin{corollary}
Let $\sigma=0$. On the assumptions of Theorem~\ref{asymptotic distribution =0}
and on the assumption that $\rho$ is positive and continuous
\begin{align*}
\lim_{n\to\infty}\,\inf_{\mathcal{T}\in\mathcal{H}_s(R,0)}\mathbb{P}_\mathcal{T}
(\vartheta\in I_{\vartheta,n})=
1-\alpha
\end{align*}
holds for the intervals \eqref{intervals sigma=0} and for all
$\vartheta\in\{\gamma,\lambda,\mu(x)|x\in\R\}$.
\end{corollary}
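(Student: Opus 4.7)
The plan is to derive the corollary from Theorem~\ref{asymptotic distribution =0} by a uniform Slutsky-type argument. Writing the event $\{\rho\in I_{\rho,\epsilon}\}$ in pivotal form gives
\begin{align*}
\mathbb{Q}(\rho\in I_{\rho,\epsilon})=\mathbb{Q}\Bigl(\Bigl|\frac{\hat\rho-\rho}{\epsilon U(\epsilon)^{\tau(\rho)}}\Bigr|\le \hat s_\rho\, q_{\alpha/2}\Bigr),
\end{align*}
where $\tau(\gamma)=1/2$, $\tau(\lambda)=3/2$ and $\tau(\mu(x))=5/2$ are the scalings in Theorem~\ref{asymptotic distribution =0}. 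That theorem identifies the limit of the pivot as $s_\rho Z$ with $Z\sim N(0,1)$ and $s_\rho$ the analogue of $\hat s_\rho$ with $\gamma,\lambda$ in place of $\hat\gamma,\hat\lambda$; positivity of $\delta$ guarantees $s_\rho>0$.

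The second step is plug-in consistency $\hat s_\rho\to s_\rho$. Theorem~\ref{asymptotic distribution =0} gives $\hat\gamma-\gamma=O_{\mathbb{Q}}(\epsilon U^{1/2})$ and $\hat\lambda-\lambda=O_{\mathbb{Q}}(\epsilon U^{3/2})$, both $o_{\mathbb{Q}}(1)$ by the undersmoothing condition $\epsilon U^{5/2}\to 0$. The continuous mapping theorem, together with the continuity of $\delta$ and of the exponential, yields $\hat s_\rho\xrightarrow{\mathbb{Q}}s_\rho$. Slutsky's lemma then gives
\begin{align*}
\mathbb{Q}(\rho\in I_{\rho,\epsilon})\longrightarrow\mathbb{Q}\bigl(|s_\rho Z|\le s_\rho q_{\alpha/2}\bigr)=\mathbb{Q}(|Z|\le q_{\alpha/2})=1-\alpha.
\end{align*}

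The main obstacle, and the reason the subclass $\mathcal{H}_s(R,0)$ is introduced, is upgrading this pointwise-in-$\mathcal{T}$ statement to uniform convergence over $\mathcal{T}\in\mathcal{H}_s(R,0)$. The additional bounds $\|\mathcal{F}\mu\|_\infty\le R$ and $|\mathcal{F}\mu(u)|\le Rh(u)$ in \eqref{H} control the approximation part of the error decomposition \eqref{error_sigma} uniformly in $\mathcal{T}$, so the undersmoothing condition $\epsilon U^{(2s+5)/2}\to\infty$ renders the approximation error asymptotically negligible uniformly. Invoking the uniform version of the asymptotic normality result (Theorem~\ref{uniform theorem}) yields both uniform weak convergence of the pivot and uniform rates for $\hat\gamma,\hat\lambda$; combined with boundedness of $\gamma,\lambda$ on $\mathcal{H}_s(R,0)$ and uniform continuity of $\delta$ on compacts, this upgrades the plug-in step to $\sup_{\mathcal{T}}|\hat s_\rho-s_\rho|\xrightarrow{\mathbb{Q}}0$. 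A uniform Slutsky lemma then produces the claimed coverage $1-\alpha$ uniformly over $\mathcal{H}_s(R,0)$.
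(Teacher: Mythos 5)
Your proposal follows essentially the same route as the paper: pivotal form, uniform convergence in distribution from Theorem~\ref{uniform theorem}, plug-in consistency of $\hat s_\rho$ via uniform consistency of $\hat\gamma,\hat\lambda$ together with continuity and positivity of $\delta$, and the uniform Slutsky-type Lemma~\ref{uniform in distribution} to conclude coverage $1-\alpha$. One small correction: the approximation error is already controlled uniformly over the larger class $\mathcal{G}_s(R,0)$ by the bounds \eqref{bias_sigma}--\eqref{bias_mu}, so the extra conditions \eqref{H} defining $\mathcal{H}_s$ are needed not for the bias but for the uniform convergence of the covariances of the linearized stochastic errors (cf.\ the bound \eqref{f_U}).
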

If the infimum in the corollary is omitted, then the statement holds for all
L\'evy triplets $\mathcal{T}$ in $\mathcal{G}_s(R,0)$ and is a direct
consequence of Theorem~\ref{asymptotic distribution =0}. The same holds for the
other confidence intervals and sets, where in the case of positive volatility
the statements hold for the corresponding L\'evy triplets $\mathcal{T}$ in
$\mathcal{G}_s(R,\sigma_{\max})$ and follow from Theorem~\ref{asymptotic
distribution >0}.
\begin{remark}
To consider the 
two parameters $\gamma$ and $\lambda$ jointly, we define the confidence set
$A_n:=\{(\hat\gamma+\epsilon_n U^{1/2}\hat{s}_\gamma x,\hat\lambda+\epsilon_n
U^{3/2}\hat{s}_\lambda y)^\top|\:x^2+y^2\le k_\alpha\}$, where $k_\alpha$
denotes the $(1-\alpha)$--quantile of the chi--squared distribution $\chi_2^2$
with two degrees of freedom.
  Then it holds
  \begin{align*}
  \lim_{n\to\infty}\,\inf_{\mathcal{T}\in\mathcal{H}_s(R,0)}\mathbb{P}_\mathcal{T}((\gamma,\lambda)^\top \in A_{n})=1-\alpha.
  \end{align*}
\end{remark}

For $\sigma>0$ we define confidence intervals
\begin{align}\label{intervals sigma>0}
  \begin{array}{rll}
  I_{\gamma,n}:=&[\hat\gamma-\hat s_\gamma\epsilon_n U^{-1}e^{T\sigma^2U^2/2}q_{\alpha/2},&\hat\gamma+\hat s_\gamma\epsilon_n U^{-1}e^{T\sigma^2U^2/2}q_{\alpha/2}],\\
  I_{\lambda,n}:=&[\hat\lambda-\hat s_\lambda\epsilon_n e^{T\sigma^2U^2/2}q_{\alpha/2},&\hat\lambda+\hat s_\lambda\epsilon_n e^{T\sigma^2U^2/2}q_{\alpha/2}],\\
  I_{\mu(0),n}:=&[\hat\mu(0)-\hat s_{\mu(0)}\epsilon_n Ue^{T\sigma^2U^2/2}q_{\alpha/2},&\hat\mu(0)+\hat s_{\mu(0)}\epsilon_n Ue^{T\sigma^2U^2/2}q_{\alpha/2}],\\
  I_{\mu(x),n}:=&[\hat\mu(x)-\hat s_{\mu}\epsilon_n Ue^{T\sigma^2U^2/2}q_{\alpha/2},&\hat\mu(x)+\hat s_{\mu}\epsilon_n Ue^{T\sigma^2U^2/2}q_{\alpha/2}],
  \end{array}
\end{align}
where $x\in\R\backslash\{0\}$,
\begin{align*}
  \left(\begin{array}{l}\hat s_\gamma\\\hat s_\lambda\\\hat s_{\mu(0)}\\\hat s_{\mu}\end{array}\right)
  :=\frac{\sqrt{2}\|\rho\|_{L^2(\R)}}
{\exp(T(\sigma^2/2+\hat\gamma-\hat\lambda))T^2\sigma^2}
  \left(\begin{array}{l}
  |w_\gamma^1(1)|\\
  |w_\lambda^1(1)|\\
  |w_0(1)|/(2\pi)\\
  |w_\mu^1(1)|/(2\pi)
  \end{array}\right)
\end{align*}
and $q_\alpha$ denotes the $(1-\alpha)$--quantile of the standard normal distribution. We assume that the weight functions are chosen such that $w_\gamma^1(1)$, $w_\lambda^1(1)$, $w_0(1)$, $w_\mu^1(1)\in\R\backslash\{0\}$.
We note that instead of estimating $\rho$ nonparametrically it suffices for
positive volatility to estimate the $L^2$--norm of $\rho$.
For example, in the case of equidistant design we can first estimate
$\mathcal{O}$ with the standard Nadaraya--Watson estimator for regression and
then estimate the $L^2$--norm of $\rho$ from the sum of the
squared residuals, which
leads to a consistent estimator as shown by Dette and Neumeyer
\cite{DetteNeumeyer2001}.

\begin{corollary}
Let $\sigma>0$. On the assumptions of Theorem~\ref{asymptotic distribution >0}
\begin{align*}
\lim_{n\to\infty}\inf_{\mathcal{T}}\mathbb{P}_\mathcal{T}(\vartheta\in
I_{\vartheta,n})=
1-\alpha
\end{align*}
holds for the intervals \eqref{intervals sigma>0} and for all
$\vartheta\in\{\gamma,\lambda,\mu(x)|x\in\R\}$, where the infimum is over all
$\mathcal{T}\in\mathcal{H}_s(R,\sigma_{\max})$ with volatility $\sigma$.
\end{corollary}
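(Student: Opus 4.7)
The strategy is to translate the coverage event into a standard normal probability via the asymptotic expansion in Theorem~\ref{asymptotic distribution >0}, apply Slutsky's lemma to deal with the plug-in scaling, and invoke the uniform version of that theorem to obtain uniformity over $\mathcal{H}_s(R,\sigma_{\max})$. Fix $\rho\in\{\gamma,\lambda,\mu(0),\mu(x)\}$ with $x\neq 0$, and let $c_\rho(U)$ denote the polynomial-in-$U$ scaling attached to $\Delta\hat\rho$ in \eqref{intervals sigma>0}. Then $\{\rho\in I_{\rho,\epsilon}\}$ coincides with
\[ \left\{ \frac{|\Delta\hat\rho|}{\epsilon\, c_\rho(U)\, e^{T\sigma^2 U^2/2}} \le \hat s_\rho\, q_{\alpha/2} \right\}. \]
Theorem~\ref{asymptotic distribution >0} identifies the left-hand side, up to an additive $o_{\mathbb{Q}}(1)$ term, with $d\,|w_\rho^1(1)|$ (respectively $d\,|w_0(1)|/(2\pi)$ in the $\mu(0)$ case) times one of $|W_{\epsilon,U}|$, $|V_{\epsilon,U}|$ or $|Z_{\epsilon,U}(x)|$, each divided by $\epsilon e^{T\sigma^2 U^2/2}$. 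By \eqref{W and V} the rescaled pair $(W_{\epsilon,U},V_{\epsilon,U})/(\epsilon e^{T\sigma^2 U^2/2})$ converges in distribution to an independent pair $(W,V)$ of standard normals, so its absolute marginals are $|N(0,1)|$ in the limit; for $\rho=\mu(x)$ one checks that, although the trigonometric coefficients of $Z_{\epsilon,U}(x)=\cos(Ux)W_{\epsilon,U}+\sin(Ux)V_{\epsilon,U}$ oscillate as $U\to\infty$, the random variable $\cos(Ux)W+\sin(Ux)V$ has characteristic function $e^{-t^2/2}$ for every value of $U$, so the limit is still $N(0,1)$.

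Next I would check consistency of the plug-in $\hat s_\rho$. The ratio $\hat s_\rho/s_\rho$ is a smooth function of $\hat\gamma-\gamma$ and $\hat\lambda-\lambda$, and both differences vanish in $\mathbb{Q}$-probability since the scalings $\epsilon U^{-1} e^{T\sigma^2 U^2/2}$ and $\epsilon e^{T\sigma^2 U^2/2}$ tend to $0$ under the assumption $\epsilon U^2\sqrt{\log U}\,e^{T\sigma^2 U^2/2}\to 0$. Hence $\hat s_\rho\to s_\rho$ in probability, and Slutsky's lemma reduces the pointwise coverage probability to $\mathbb{Q}(|Z|\le q_{\alpha/2})=1-\alpha$ for $Z\sim N(0,1)$.

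The main obstacle is upgrading this pointwise convergence to the uniform statement $\inf_{\mathcal{T}}\mathbb{Q}_\mathcal{T}(\rho\in I_{\rho,\epsilon})\to 1-\alpha$ over $\mathcal{T}\in\mathcal{H}_s(R,\sigma_{\max})$ with volatility $\sigma$. For this I would invoke the uniform version of Theorem~\ref{asymptotic distribution >0}, namely Theorem~\ref{uniform theorem}; the additional constraints \eqref{H} defining $\mathcal{H}_s$ force a uniform decay of $|\mathcal{F}\mu|$, which in turn yields uniform bounds on the approximation error and on the linearization remainder $\mathcal{R}_{\epsilon,U}$ through the metric-entropy argument behind Proposition~\ref{sup L}. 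Uniform consistency of $\hat s_\rho$ over the class then follows because $(\gamma,\lambda)$ ranges in a compact set determined by $\mathcal{G}_s(R,\sigma_{\max})$ on which $d$ is a uniformly continuous function. Combining the uniform versions of the previous two steps delivers the required uniform coverage.
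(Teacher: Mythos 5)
Your proposal is correct and follows essentially the same route as the paper: the corollary is obtained from Theorem~\ref{asymptotic distribution >0} together with its uniform version (Theorem~\ref{uniform theorem}), the generalized Slutsky lemma (Lemma~\ref{uniform in distribution}) absorbing the smaller error terms and the plug-in factor $\hat s_\rho/s_\rho$, and the observation that $(\hat\gamma,\hat\lambda)$ are uniformly consistent so that $\hat s_\rho$ converges uniformly in probability to $s_\rho$ (which is bounded away from zero on the compact parameter range). Your treatment of $Z_{\epsilon,U}(x)$ via the Gaussianity of $(W_{\epsilon,U},V_{\epsilon,U})$ and the identity $\cos^2(Ux)+\sin^2(Ux)=1$ matches the paper's remark that the variances converge even though the covariances with the other components do not.
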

For the pair $(\gamma,\lambda)^\top$ a uniform confidence set may be obtained
similarly as in the case $\sigma=0$. Since for $x\in\R\backslash\{0\}$ the
covariance of $Z_{n,U_n}(x)$ and $V_{n,U_n}$ and the covariance of
$Z_{n,U_n}(x)$ and $W_{n,U_n}$ do not converge, confidence sets for
$(\gamma,\mu(x))^\top$ and $(\lambda,\mu(x))^\top$ have to be constructed
differently.
Let us illustrate how to proceed in this case by constructing a confidence set for $(\mu(x_1),\mu(x_2))^\top$, $x_1,x_2\in\R\backslash\{0\}$. By Theorem~\ref{asymptotic distribution >0} the convergence
\begin{align*}
  &\frac1{\epsilon_n \exp(T\sigma^2U_n^2/2)}
  \left(\frac{1}{U_n} \left(\begin{array}{c}
  \Delta\hat\mu(x_1) \\
  \Delta\hat\mu(x_2)
  \end{array}\right)
  -\frac{d \, w_\mu^1(1)}{2\pi}\left(\begin{array}{rcl}
  Z_{n,U_n}(x_1)\\
  Z_{n,U_n}(x_2)
  \end{array}\right)\right)\xrightarrow{\mathbb{P}} 0
\end{align*}
holds for $n\to \infty$. We
define
\begin{align*}
  M_n:=\left(\begin{array}{cc}\cos(U_nx_1)&\sin(U_nx_1)\\ \cos(U_nx_2)&\sin(U_nx_2)\end{array}\right),
\end{align*}
and observe that the components of
$M_n^{-1}$ are bounded for all $n$ for which the absolute value of the
determinant
is bounded from below by some $c>0$, i.e., we have $|\sin(U_n(x_2-x_1))|\ge c$.
For such $n$
\begin{align*}
  &\frac1{\epsilon_n \exp(T\sigma^2U_n^2/2)}
  \left(\frac{M_{n}^{-1}}{U_n} \left(\begin{array}{c}
  \Delta\hat\mu(x_1) \\
  \Delta\hat\mu(x_2)
  \end{array}\right)
  -\frac{d \, w_\mu^1(1)}{2\pi}\left(\begin{array}{rcl}
  W_{n,U_n}\\
  V_{n,U_n}
  \end{array}\right)\right)\xrightarrow{\mathbb{P}} 0
\end{align*}
holds for $n\to\infty$.
We apply the additive version of Slutsky's lemma together with the convergence~\eqref{W and V} of the appropriately scaled random variables $W_{n,U_n}$ and $V_{n,U_n}$. In view of the definition of $d$ in~\eqref{d} we observe that $\hat s_\mu$ is a consistent estimator of $d |w_\mu^1(1)|/(2\pi)$ and we apply the multiplicative version of Slutsky's lemma, which then leads to
\begin{align*}
  \frac{1}{\hat s_\mu \epsilon_n U_n \exp(T\sigma^2U_n^2/2)}M_{n}^{-1}
  \left(\begin{array}{c}
  \Delta\hat\mu(x_1) \\
  \Delta\hat\mu(x_2)
  \end{array}\right)
  \xrightarrow{d} \left(\begin{array}{c}W\\V\end{array}\right)
\end{align*}
for $n\to\infty$ such that $|\sin(U_n(x_2-x_1))|\ge c$. We define
\[
  B_n:=\left(\hat\mu(x_1), \hat\mu(x_2)\right)^\top +M_n \{\hat s_\mu \epsilon_n U_n \exp(T\sigma^2U_n^2/2) (x,y)^\top|\,x^2+y^2\le k_\alpha\},
\]
where $k_\alpha$ denotes the $(1-\alpha)$--quantile of the chi--squared distribution $\chi^2_2$ with two degrees of freedom.
Then
\begin{align*}
\lim_{\substack{|\sin(U_n(x_2-x_1))|\ge c\\ n\to\infty}}
\mathbb{P}_\mathcal{T}((\mu(x_1),\mu(x_2))^\top \in B_{n})=1-\alpha
\end{align*}
holds for all $\mathcal{T}\in\mathcal{G}_s(R,\sigma_{\max})\cap\{\sigma>0\}$.

\subsection{A numerical example}

We consider the \emph{Merton} jump diffusion model \cite{merton:1976}, where the jump density is specified by
\[\nu(x)=\frac{\lambda}{\sqrt{2\pi}\,\zeta}\exp\left(-\frac{(x-\eta)^2}{2\zeta^2
} \right) , \quad x\in\R,\]
with parameters $\sigma,\lambda\ge0$, $\zeta>0$, $\eta\in\R$ and where
$\gamma\in \R$ is determined by the martingale condition. We simulate data with
the parameters $\sigma=0.1$, $\lambda=5$, $\eta=-0.1$, $\zeta=0.2$, which
imply $\gamma=0.379$. The interest rate is taken to be $r=0.06$. We observe
prices of $n=100$ European options with maturity $T=0.25$. The strike prices are
obtained from sampling the data points $(x_j)$ from a centered normal
distribution with variance $1/2$, so that more strike prices are sampled at the
money than in or out of the money. The observation error is chosen to be a
centered normal distribution with variance $\delta^2\mathcal{O}(x_j)^2$,
$\delta=0.01$. Belomestny and Rei\ss~\cite{BelomestnyReiss2006b} describe the
implementation of the estimation method in detail.

\ifodd\switch
\begin{figure}
\centering
\includegraphics[width=0.6\textwidth]{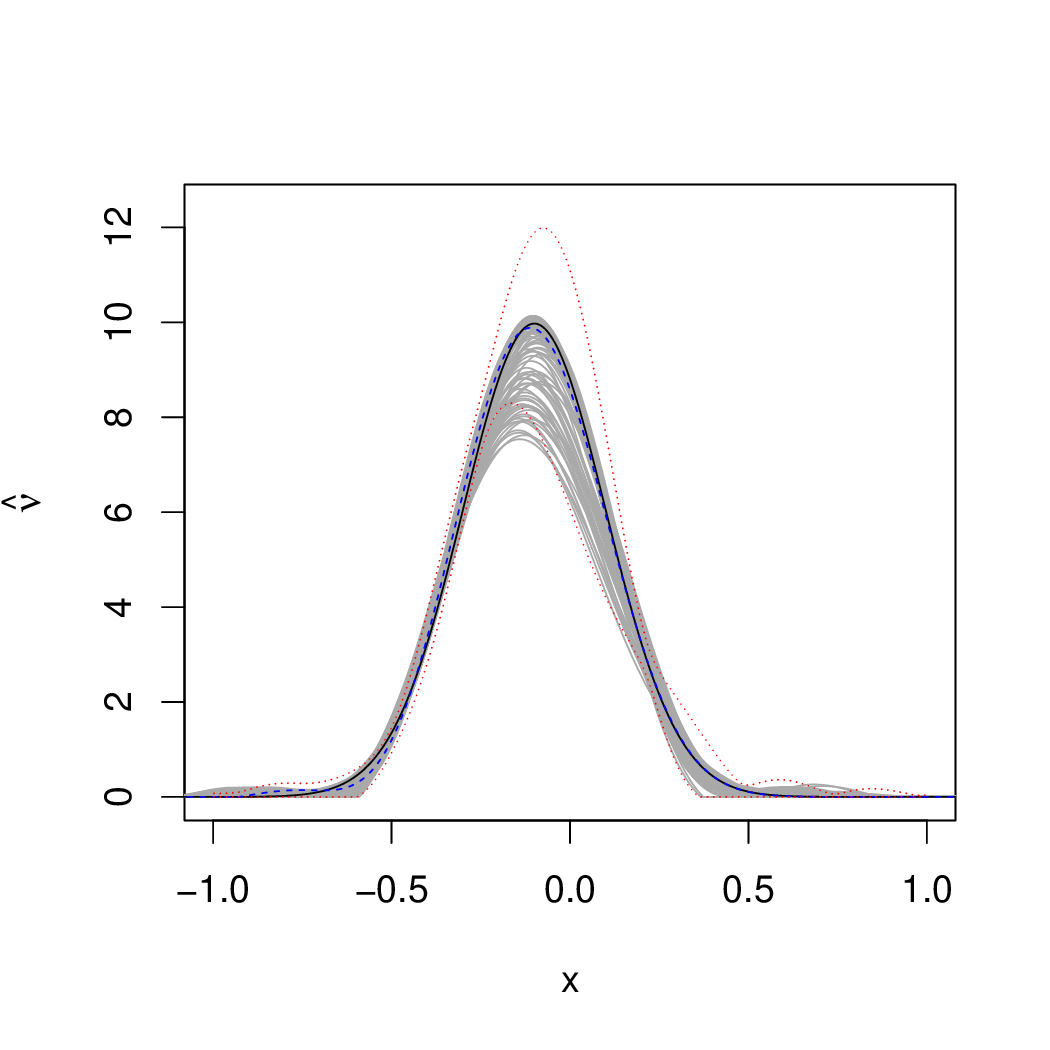}
\caption{True (black, solid) and estimated (blue, dashed) L\'evy density with
pointwise 95\% confidence intervals (red, dotted), using the oracle cut--off
value $U=18.5$. Additional 100 estimated L\'evy densities (gray) from a Monte
Carlo simulation of the Merton model.}
\label{fig:1}
\end{figure}
\else
\begin{figure}[ht]
\centering
\includegraphics[width=0.6\textwidth]{confidenceNu}
\caption{True (black, solid) and estimated (blue, dashed) L\'evy density with
pointwise 95\% confidence intervals (red, dotted), using the oracle cut--off
value $U=18.5$. Additional 100 estimated L\'evy densities (gray) from a Monte
Carlo simulation of the Merton model.}
\label{fig:1}
\end{figure}
\fi

We interpolate the corresponding European call prices linearly. The weight functions are chosen as in \cite{soehltrabs} with smoothness parameter $s=2$. In the simulations the confidence intervals based on the asymptotic distribution turn out to be to conservative.
Such confidence intervals would be based on the asymptotic variance of the linearized stochastic errors that is the stochastic errors, where the linearization \eqref{eqLinearization} is used. Instead of taking the asymptotic variance of the linearized stochastic errors, we derive confidence intervals from the finite sample variance \eqref{covariance} of the linearized stochastic errors.
In the finite sample variance we substitute $\sigma$, $\gamma$, $\lambda$, $\mu$ and $\nu$ by their respective estimators. This yields feasible confidence intervals. We estimate with the oracle choice of the cut--off values and perform 1000 Monte Carlo iterations. The coverage probabilities of $95\%$ confidence intervals for $\sigma^2$, $\gamma$ and $\lambda$ are approximately 0.98, 0.95 and 0.91, respectively.
We see that the that the coverage probabilities are close to the prescribed confidence level.

Figure~\ref{fig:1} illustrates the true and an estimated L\'evy density, and the pointwise $95\%$ confidence intervals based on the finite sample variance. Additional 100 estimated L\'evy densities are plotted. At zero there is a negative bias since the peak is smoothed out. The precise construction of the confidence intervals and more calibration results are contained in~\cite{soehltrabs}.

\section{Conclusion}\label{secConclusion}

We have shown asymptotic normality in a nonparametric calibration method for exponential L\'evy models. These results were used to derive confidence intervals and confidence sets. We have seen in a numerical example that confidence intervals based on finite sample variance perform well in terms of coverage probabilities. The confidence intervals extend the calibration method beyond a pure point estimate and enable an assessment of the calibration error.
Although parametric models might be fitted better, the parametric approach is always exposed to the risk of model misspecification and the obtained confidence results should be used for a goodness--of--fit test.

The estimation method and the asymptotic normality results may be adapted to other models as long as there is an equation relating the option function to the characteristic function and the parameters of interest appear in the characteristic function. The constructed confidence intervals and sets may be used to quantify the errors in pricing, hedging and risk management.

\section{Auxiliary results and remaining proofs}\label{secProofs}

We will now state the conditions more precisely on which the regression
model~\eqref{regression} and the Gaussian white noise model~\eqref{whitenoise}
are equivalent. As a simplification we assume that the observations in the
regression model are equidistant with mesh size $\Delta_n$.
We restrict the Gaussian white noise model to a sequence of growing intervals $[x_1-\Delta_n,x_n+\Delta_n]$.
We suppose $\rho^2>0$ to be an absolutely continuous function and
$|\frac{\partial}{\partial x}\log\rho(x)|\le C$ to hold for some $C<\infty$.
The functions $\mathcal{O}$ are both uniformly bounded
$\mathcal{O}(x)=S^{-1}\mathcal{C}(x,T)-(1-e^x)^+\le 1$ as well as uniformly
Lipschitz
$|\mathcal{O}'(x)|=|\int_{-\infty}^{x}\mathcal{O}''(x)\mathrm{d}x-\mathbf{1}_{\{
x>0\}} +e^{(\gamma-\lambda)T}\mathbf{1}_{\{x>\gamma T, \sigma=0\}}|\le 4+
e^{RT}$, where we used Proposition~2.1 in~\cite{calibration} and $|\gamma|\le
R$. These properties of $\mathcal{O}$ are used to apply Corollary~4.2
in~\cite{BrownLow}.

\subsection{Proof of Proposition~\ref{sup L}}
First we define $X(u):=\int_{-\infty}^{\infty}e^{iux}\rho (x) \mathrm{d}W(x)$.
Since $X(-u)=\overline{X(u)}$ it suffices to consider suprema of the absolute value $|X(u)|$ over positive index sets.
We assumed that there is an $p>1$ such that
$\int_{-\infty}^{\infty}(1+|x|)^{p}\rho(x)^2\mathrm{d}x<\infty$. It is shown
in \cite{polar} that there exists a number $c>0$ such that
$\sqrt{\mathbb{E}[|X(u)-X(v)|^2]}\le c|u-v|^\alpha$  for all $u,v\in\R$ with
$\alpha:=\min(p/2,1)\in(1/2,1]$. Denote by $N_{\delta}(I,r)$ the covering
number, that is the minimum number of closed balls of radius $r$ in the
metric~$\delta$ with centers in $I$ that cover $I$. We define
$\delta(u,v):=c|u-v|^\alpha$ and $d(u,v):=\sqrt{\mathbb{E}[|X(u)-X(v)|^2]}$. A
ball of radius~$r$ in the metric $\delta$ covers an interval of length
$2(r/c)^{1/\alpha}$.
Thus, it holds
\[N_{\delta}([0,U],r) =\left\lceil U
\left(c/r\right)^{1/\alpha}/2\right\rceil.\]
The radius of the smallest ball with center in $[0,U]$ that contains $[0,U]$ is
$c (U/2)^\alpha$ with respect to the metric~$\delta$. There exists $D<\infty$
such that $d(u,v)\le D$ for all $u,v\in\R$. For $U$ large enough such that
$cU^\alpha\ge D$ we have the entropy bound
\begin{align}
J([0,U],d)
&:= \int_0^{\infty}\left(\log(N_d([0,U],r))\right)^{1/2}\mathrm{d}r
\le
\int_0^{D}\left(\log(N_\delta([0,U],r))\right)^{1/2}\mathrm{d}r\label{entropy}\\
&
\le \int_0^{D}\left(\log\left(U \left(c/r\right)^{1/\alpha}\right)\right)^{1/2}\mathrm{d}r
\le  \alpha^{-1/2} \int_0^{D}\left(\log\left(  U^\alpha c/r\right)\right)^{1/2}\mathrm{d}r,\nonumber\\
\intertext{here we substitute $r= U^\alpha c s$,}
& \le  \alpha^{-1/2}  U^\alpha c \int_0^{D/(U^\alpha c)}\left(\log\left( 1/s\right)\right)^{1/2}\mathrm{d}s.\label{entropy_bound}
\end{align}
This integral is solved by
\[\int_0^x\sqrt{\log y^{-1}}\mathrm{d}y=\frac{\sqrt{\pi}}{2} -\frac{\sqrt{\pi}}{2}\text{Erf}(\sqrt{\log x^{-1}}) +x\sqrt{\log x^{-1}},\]
where $\text{Erf}(y)=\frac{2}{\sqrt{\pi}}\int_0^ye^{-t^2}\mathrm{d}t$. For all $y>0$ the estimate \(1-\text{Erf}(y)<\exp(-y^2)/(\sqrt{\pi}y)\) holds. For each $\alpha>0$ this yields $\tilde c>0$ such that for all $x\in(0,1/2^\alpha)$
\[\int_0^x\sqrt{\log y^{-1}}\mathrm{d}y\le\tilde c x\sqrt{\log x^{-1}}.\]
Thus, \eqref{entropy_bound} can be bounded by
\begin{align*}
(\alpha^{-1/2} \tilde c D) \sqrt{\log({U}^\alpha c/D)}\lesssim \sqrt{\log(U)}
\end{align*}
as $U\to\infty$.
Consequently $\sqrt{\log(U)}$ is an asymptotic upper bound of the entropy integral~\eqref{entropy}. We apply Dudley's theorem \cite[p. 219]{kahane1985} to the real part of $X$.
For all $q\ge1$ this yields a continuous version $X'$ of $\mathrm{Re}(X)$ with
\begin{equation}\label{sup X}
\mathbb{E}\left[\sup_{u\in[0,U]}|X'(u)|^q\right]\lesssim(\log(U))^{q/2}
\end{equation}
as $U\to\infty$.
Since $X'$ and $\mathrm{Re}(X)$ are both continuous they are indistinguishable and \eqref{sup X} holds for $\mathrm{Re}(X)$ likewise. We obtain analogously for all $q\ge1$
\begin{align*}
\mathbb{E}\left[\sup_{u\in[0,U]}|\mathrm{Im}(X(u))|^q\right]
\lesssim(\log(U))^{q/2}
\end{align*}
as $U\to\infty$.
We estimate from above for all $q\ge1$
\begin{align}
&\phantom{\;\le\;}\mathbb{E}\left[\sup_{u\in[-1,1]}|\mathcal{L}_{n,U}(u)|^q\right] \nonumber\\
&\le \sup_{u\in[-U,U]}\left|\frac{\epsilon_n iu(1+iu)}{T(1+iu(1+iu)\mathcal{FO}(u))}\right|^q \mathbb{E}\left[\sup_{u\in[0,U]}|X(u)|^q\right]\nonumber\\
&\le \left(\frac{\epsilon_n U\sqrt{1+U^2} } {T\exp(T(-\sigma^2U^2/2+\sigma^2/2+\gamma-\lambda-\|\mathcal{F}\mu\|_\infty))}\right)^q
\mathbb{E}\left[\sup_{u\in[0,U]}|X(u)|^q\right]\nonumber\\
&\lesssim \left(\epsilon_n U^2 \sqrt{\log(U)} \exp(T\sigma^2U^2/2)\right)^q\label{bound sup L}
\end{align}
as $U\to\infty$.
This completes the proof for the case $\sigma=0$. For $\sigma>0$ we observe
\begin{align*}
\mathbb{E}\left[\sup_{u\in[-1,1]}|\mathcal{L}_{n,U}(u)|^q\right]
\le \mathbb{E}\left[\sup_{|u|\le U-1}|\mathcal{L}_{n,1}(u)|^q\right] +\mathbb{E}\left[\sup_{|u|\in[U-1,U]}|\mathcal{L}_{n,1}(u)|^q\right].
\end{align*}
By the previous considerations the growth of the first part can be bounded by
\begin{align*}
\left(\epsilon_n (U-1)^2 \sqrt{\log(U-1)} \exp(T\sigma^2(U-1)^2/2)\right)^q
\lesssim \left(\epsilon_n U^2  \exp(T\sigma^2U^2/2)\right)^q.
\end{align*}
For the second part we note that as in \eqref{entropy} we have
\begin{align*}
J([U-1,U],d)\le
\int_0^{D}\left(\log(N_\delta([U-1,U],r))\right)^{1/2}\mathrm{d}r
= \int_0^{D}\left(\log(N_\delta([0,1],r))\right)^{1/2}\mathrm{d}r
\end{align*}
and thus the entropy does not depend on $U$. For $u\in[U-1,U]$ the process $X$ does not contribute a logarithmic factor and it holds
\begin{align*}
\mathbb{E}\left[\sup_{u\in[-1,1]}|\mathcal{L}_{n,U}(u)|^q\right]
\lesssim \left(\epsilon_n U^2  \exp(T\sigma^2U^2/2)\right)^q
\end{align*}
as $U\to\infty$.

\subsection{The linearized stochastic errors}

The linearized stochastic errors are of the form $\int_{0}^{1}f_j(u)\mathcal{L}_{n,U}(u)\mathrm{d}u$, where $f_j$ with $j=1,\dots,n$ are Riemann--integrable function in $L^\infty([0,1])$. Next we will show that these are jointly normal distributed.
Almost surely $\mathcal{L}_{n,U}$ is continuous. Thus, almost surely the $f_j \mathcal{L}_{n,U}$ are Riemann--integrable and almost surely
\[\frac{1}{m}\sum_{k=1}^{m} f_j(k/m)\mathcal{L}_{n,U}(k/m) \rightarrow\int_0^1 f_j(u)\mathcal{L}_{n,U}(u)\mathrm{d}u\]
as $m\to\infty$. Let $C>0$ be such that $\|f_j\|_\infty\le C$ for all $j=1,\dots,n$.
For each $m$ the $n$ sums are joint, centered normal random variables. For $m\to \infty$ the covariance matrix converges by the dominated convergence theorem with the dominating function $C^2\sup_{u\in[0,1]}|\mathcal{L}_{n,U}(u)|^2$, where $\sup_{u\in[0,1]}|\mathcal{L}_{n,U}(u)|^2$ is an integrable random variable by Proposition~\ref{sup L}. Thus, the characteristic function converges pointwise. By L\'evy's continuity theorem this shows that the sums convergence jointly in distribution to normal random variables. So $\int_{0}^{1}f_j(u)\mathcal{L}_{n,U}(u)\mathrm{d}u$ are jointly normal distributed.

For a fixed cut--off value $U$ the linearized stochastic errors are jointly normal distributed. So the natural question is whether the appropriately scaled covariance matrix converges for $U\to\infty$.

Let $w_j,w_k:[0,1]\to\R$ be Riemann--integrable functions in $L^\infty([0,1])$. It holds
\begin{align*}
&T\int_0^1 w_j(u) \mathcal{L}_{n,U}(u)e^{-iUux_j}\mathrm{d}u\\
&={\epsilon_n U^2e^{-T(\sigma^2/2+\gamma-\lambda)}}\int_0^1
f_U(u)\int_{-\infty}^\infty
e^{iUu(x-\theta_j)+T\sigma^2U^2u^2/2}\rho(x)\mathrm{d}W(x)\mathrm{d}u,
\end{align*}
where
\begin{equation}\label{f}
f_U(u):=\frac{w_j(u)(-u^2+iu/U)}{\exp(T\mathcal{F}\mu(Uu))}
\end{equation}
and $\theta_j:=x_j+T\sigma^2+T\gamma$. We define analogously
\begin{equation}\label{g}
g_U(u):=\frac{w_k(u)(-u^2+iu/U)}{\exp(T\mathcal{F}\mu(Uu))}
\end{equation}
and $\theta_k:=x_k+T\sigma^2+T\gamma$. We extend $f_U$ and $g_U$ by zero outside the interval $[0,1]$.

Since $\mathbb{E}\left[\sup_{u\in[-1,1]}|\mathcal{L}_{n,U}(u)|^2\right]<\infty$ we may apply Fubini's theorem and then we apply the It\^{o} isometry to obtain
\begin{align}
&{T^2e^{2T(\sigma^2/2+\gamma-\lambda)}} \mathbb{E}\left[\int_0^1w_j(u)\mathcal{L}_{n,U}(u)e^{-iUux_j}\mathrm{d}u \overline{\int_0^1w_k(v)\mathcal{L}_{n,U}(v)e^{-iUvx_k}\mathrm{d}v}\right]\nonumber\\
&={\epsilon_n^2U^4}\int_0^1\int_0^1 \int_{-\infty}^\infty f_U(u)e^{iUu(x-\theta_j)+T\sigma^2U^2u^2/2}\nonumber\\
&\phantom{\;=\;}\overline{g_U(v)e^{iUv(x-\theta_k)+T\sigma^2U^2v^2/2}}\rho(x)^2
\mathrm{d}x \mathrm{d}u\mathrm{d}v.\label{ito}
\end{align}
To separate real and imaginary part we will also need
\begin{align}
&{T^2e^{2T(\sigma^2/2+\gamma-\lambda)}} \mathbb{E}\left[\int_0^1w_j(u)\mathcal{L}_{n,U}(u)e^{-iUux_j}\mathrm{d}u {\int_0^1w_k(v)\mathcal{L}_{n,U}(v)e^{-iUvx_k}\mathrm{d}v}\right]\nonumber\\
&={\epsilon_n^2U^4}\int_0^1\int_0^1 \int_{-\infty}^\infty f_U(u)e^{iUu(x-\theta_j)+T\sigma^2U^2u^2/2}\nonumber\\
&\phantom{\;=\;}{g_U(v)e^{iUv(x-\theta_k)+T\sigma^2U^2v^2/2}}\rho(x)^2\mathrm{d}
x \mathrm{d}u\mathrm{d}v.\label{ito_II}
\end{align}

\begin{lemma}\label{linearized_0}
Let $\sigma=0$.
For $j=1,\dots,m $ take $x_j\in\R$ and $w_j:[0,1]\to\R$ to be
Riemann--integrable functions in $L^\infty([0,1])$.
Let $\rho$ be continuous at the points $x_1+T\gamma$,
$x_2+T\gamma,\dots,x_m+T\gamma$ and
let $\mathcal{F}\rho^2\in L^1(\R)$.
Let $W_{x_1},\dots,W_{x_m},V_{x_1},\dots,V_{x_m}$ be Brownian motions.
If $x_j=x_k$ let $W_{x_j}=W_{x_k}$ and $V_{x_j}=V_{x_k}$ otherwise let the Brownian motions be distinct. Let the set $\left\{W_{x_1},\dots,W_{x_m},V_{x_1},\dots,V_{x_m}\right\}$ consist of independent Brownian motions.
Then
\[\frac1{\epsilon_n U^{3/2}}\int_{0}^{1}w_j(u)\mathcal{L}_{n,U}(u)e^{-iUux_j}\mathrm{d}u\] converge jointly in distribution to
\[\frac{\sqrt{\pi}\rho(x_j+T\gamma)}{T\exp(T(\gamma-\lambda))}
\left(\int_0^{1}u^2w_j(u)\mathrm{d}W_{x_j}(u)+i\int_0^{1}u^2w_j(u)\mathrm{d}V_{
x_j}(u)\right)\]
as $U\rightarrow \infty$.
\end{lemma}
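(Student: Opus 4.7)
The plan is to exploit that for each fixed $U$ the vector $(Z_1,\ldots,Z_n)$ with $Z_j:=(\epsilon U^{3/2})^{-1}\int_0^1 w_j(u)\mathcal{L}_{\epsilon,U}(u)e^{-iUux_j}\mathrm{d}u$ is centered complex Gaussian---this is exactly the Riemann-sum argument combined with Proposition~\ref{sup L} already used in the discussion preceding the lemma. Since a centered complex Gaussian vector is characterized by its covariance $(\mathbb{E}[Z_j\overline{Z_k}])$ together with its pseudo-covariance $(\mathbb{E}[Z_j Z_k])$, it suffices to prove that both matrices converge, as $U\to\infty$, to those of the announced limit and then conclude via L\'evy's continuity theorem.

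The covariance starts from \eqref{ito} with $\sigma=0$ (so $\theta_j=x_j+T\gamma$); after dividing by $\epsilon^2 U^3$, integrating out $x$ (which produces the factor $\mathcal{F}\delta^2(U(u-v))$), and substituting $w=U(u-v)$, one factor of $U$ cancels and the quantity becomes
\[
\int_{\mathbbm{R}}\mathcal{F}\delta^2(w)e^{-iw\theta_k}\biggl(\int_0^1 f_U(u)\overline{g_U(u-w/U)}\,e^{iUu(\theta_k-\theta_j)}\mathbf{1}_{[U(u-1),Uu]}(w)\,\mathrm{d}u\biggr)\mathrm{d}w.
\]
Because $|\mathcal{F}\mu|\le\|\mu\|_{L^1}$, the bracketed inner integral is uniformly bounded, so dominated convergence in $w$ with dominant $\mathrm{const}\cdot|\mathcal{F}\delta^2|$ reduces the problem to the pointwise limit of the bracket. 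For $x_j=x_k$ the oscillating phase is absent and $f_U(u)\overline{g_U(u-w/U)}\to u^4 w_j(u)w_k(u)$ pointwise (using $\mathcal{F}\mu(Uu)\to 0$ by Riemann-Lebesgue on $\mu\in L^1$), so the bracket converges to $e^{-iw\theta_j}\int_0^1 u^4 w_j w_k\,\mathrm{d}u$; Fourier inversion $\int\mathcal{F}\delta^2(w)e^{-iw\theta_j}\mathrm{d}w=2\pi\delta(\theta_j)^2$, valid because $\mathcal{F}\delta^2\in L^1$ and $\delta$ is continuous at $\theta_j$, then delivers the limit $2\pi\delta(x_j+T\gamma)^2/(T^2 e^{2T(\gamma-\lambda)})\int_0^1 u^4 w_j w_k$. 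For $x_j\ne x_k$ the phase $e^{iUu(\theta_k-\theta_j)}$ oscillates, and a Riemann-Lebesgue-type argument---decompose $h_U=(h_U-h)+h$ with $h_U\to h$ in $L^1([0,1])$ by dominated convergence---shows the inner integral vanishes pointwise in $w$, whence the covariance tends to zero.

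For the pseudo-covariance the analogous reduction starting from \eqref{ito_II} uses $w=U(u+v)$ and produces indicators $\mathbf{1}_{[Uu,U(1+u)]}(w)$ which vanish for every fixed $w$ as $U\to\infty$, so dominated convergence yields $\mathbb{E}[Z_j Z_k]\to 0$. A direct computation then confirms that the proposed limits $Z_j^\infty=\pi^{1/2}\delta(\theta_j)/(Te^{T(\gamma-\lambda)})\bigl(\int_0^1 u^2w_j\,\mathrm{d}W_{x_j}+i\int_0^1 u^2 w_j\,\mathrm{d}V_{x_j}\bigr)$ have exactly the same structure: independence of $W_{x_j}$ and $V_{x_j}$ cancels the pseudo-covariance, independence across distinct $x_j$'s kills the off-diagonal covariance, and for $x_j=x_k$ the $W$- and $V$-contributions add to $2\int_0^1 u^4 w_j w_k$. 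The main technical obstacle is the $x_j\ne x_k$ case of the covariance: since both the weight factor $f_U(u)\overline{g_U(u-w/U)}$ and the oscillating phase depend on $U$, Riemann-Lebesgue does not apply directly and one must first upgrade the pointwise convergence of the weight factor to $L^1([0,1])$-convergence.
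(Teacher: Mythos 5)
Your proposal is correct, and it follows the paper's overall architecture --- joint Gaussianity for each fixed $U$ via Riemann sums, convergence of the covariance and pseudo-covariance matrices, and L\'evy's continuity theorem --- but the analytic core of the covariance computation is carried out by a genuinely different technique. The paper converts \eqref{ito} via the Plancherel identity into a convolution of $g_U$ with the nascent delta function $U\mathcal{F}^{-1}(\delta(\cdot+\theta_j)^2)(U\cdot)$ and invokes the basic theorem on approximate identities for the diagonal terms; for $x_j\neq x_k$ it uses the $L^2$-convergence of $\mathcal{F}f_U$ and $\mathcal{F}g_U$ together with the Cauchy--Schwarz inequality to exploit the separation of the shifts $U(\theta-\theta_j)$ and $U(\theta-\theta_k)$; and the pseudo-covariance vanishes because the supports $[0,1]$ and $[-1,0]$ are disjoint. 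You instead substitute $w=U(u-v)$ (resp.\ $w=U(u+v)$), dominate by a constant multiple of $|\mathcal{F}\delta^2|$, and obtain the diagonal limit from Fourier inversion at the continuity point $\theta_j$, the off-diagonal vanishing from a Riemann--Lebesgue argument applied after upgrading $f_U(u)\overline{g_U(u-w/U)}$ to $L^1([0,1])$-convergence (you rightly flag this as the delicate step), and the vanishing of the pseudo-covariance from the shrinking indicator $\mathbf{1}_{[Uu,U(1+u)]}(w)$. Your route uses the hypothesis $\mathcal{F}\delta^2\in L^1(\mathbbm{R})$ more directly, as the dominating function, and avoids the approximate-identity machinery; the one point you should make explicit is that $w_k(u-w/U)\to w_k(u)$ only for a.e.\ $u$, which suffices because a Riemann-integrable function is continuous almost everywhere. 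The paper's convolution formulation, in turn, makes the purely local dependence of the asymptotic variance on $\delta$ at $\theta_j$ more transparent and is the version that is upgraded to uniformity over $\mathcal{H}_s(R,0)$ in Section~\ref{secUniform}, so if you intend to recover the Corollary you would need uniform versions of your dominated-convergence and Riemann--Lebesgue steps.
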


\begin{proof}
We will first consider the case $x_j=x_k$.
We have seen that
\begin{align}
&{T^2e^{2T(\gamma-\lambda)}}\mathbb{E}\left[\int_0^1w_j(u)\mathcal{L}_{n,U}(u)e^{-iUux_j}\mathrm{d}u \overline{\int_0^1w_k(v)\mathcal{L}_{n,U}(v)e^{-iUvx_j}\mathrm{d}v}\right]\nonumber\\
&={\epsilon_n^2U^4} \int_{-\infty}^\infty \mathcal{F}f_U(U(x-\theta_j))
\overline{\mathcal{F}g_U(U(x-\theta_j))}\rho(x)^2\mathrm{d}x,\nonumber\\
\intertext{where $f_U$ and $g_U$ are defined as in \eqref{f} and \eqref{g}, respectively, and $\theta_j=x_j+T\gamma$,}
&={\epsilon_n^2U^3} \int_{-\infty}^\infty
\mathcal{F}f_U(y)\overline{\mathcal{F}g_U(y)\rho(y/U+\theta_j)^2}\mathrm{d}
y.\nonumber\\
\intertext{We notice that $\mathcal{F}\rho^2\in L^1(\R)$ implies $\rho^2\in
L^\infty(\R)$ and we obtain by the Plancherel identity}
&=2\pi\epsilon_n^2U^3\int_0^1 f_U(u)
\overline{(g_U(v) *
\mathcal{F}^{-1}(\rho(y/U+\theta_j)^2)(v))(u)}\mathrm{d}u,\label{convolution_I}
\end{align}
since the support of $f_U$ is $[0,1]$.
Because we are only interested in the limit $U\to\infty$ we may assume $U\ge1$.
By the Riemann--Lebesgue lemma $\mathcal{F}\mu(u)$ tends to zero as
$|u|\rightarrow \infty$. The factor $f_U(u)$ converges for each $u\in[0,1]$ to
$-u^2w_j(u)$ as $U\to\infty$ and the functions are dominated by a constant
independent of $U$. In order to apply dominated convergence it suffices that the
second factor is dominated by a constant independent of $U$ and converges
stochastically with respect to the Lebesgue measure on $\R$. We have
\begin{align*}
(g_U(v) * \mathcal{F}^{-1}(\rho(y/U+\theta_j)^2)(v))(u)
=\int_{-\infty}^\infty g_U(u-v)
\mathcal{F}^{-1}(\rho(y+\theta_j)^2)(Uv)U\mathrm{d}v
\end{align*}
By assumption $\mathcal{F}\rho^2$ lies in $L^1(\R)$ and so does
$\mathcal{F}^{-1}(\rho(y+\theta_j)^2)$. A dominating constant is
$\sqrt{2}\|w_k\|_\infty
\exp(T\|\mathcal{F}\mu\|_\infty)\,\|\mathcal{F}^{-1}(\rho(y+\theta_j)^2)\|_{
L^1(\R)}$.
It holds
\begin{align}
  \int_{-\infty}^\infty \mathcal{F}^{-1}(\rho(y+\theta_j)^2)(Uv)U\mathrm{d}v
  &=\int_{-\infty}^\infty \mathcal{F}^{-1}(\rho(y+\theta_j)^2)(v)\mathrm{d}v
\nonumber\\ &=\mathcal{F}\mathcal{F}^{-1}(\rho(y+\theta_j)^2)(0)
=\rho(\theta_j)^2.\label{multiple}
\end{align}
$\delta_U(v):=\mathcal{F}^{-1}(\rho(y+\theta_j)^2)(Uv)U$ is the multiple of what
is called approximate identity or nascent delta function. The basic theorem on
approximate identities states that $h*\delta_n$ converges to $h$ in $L^1(\R)$ as
$n\to\infty$ for $h\in L^1(\R)$.
Thus,
\[(-v^2w_k(v)\mathbf{1}_{[0,1]}(v))*\delta_U(v)(u)
\to -u^2w_k(u)\mathbf{1}_{[0,1]}(u)\rho(\theta_j)^2\quad\text{ in }L^1(\R)\]
as $U\to\infty$ \cite[p. 28]{grafakos2004} and in particular stochastically. If
$u\neq0$, then there is a neighborhood of $u$ where
$g_U(u)+u^2w_k(u)\mathbf{1}_{[0,1]}(u)$ converges uniformly to zero. The term
$(g_U(v)+v^2w_k(v)\mathbf{1}_{[0,1]}(v))*\delta_U(v)(u)$ converges to
zero almost surely
and in particular stochastically.
Therefore, $g_U(v)*\delta_U(v)(u)$ converges to
$-u^2w_k(u)\mathbf{1}_{[0,1]}(u)\rho(\theta_j)^2$ stochastically with respect to
the Lebesgue measure on $\R$.

We obtain under the limit $U\rightarrow \infty$ by the dominated convergence theorem
\begin{align}
&\phantom{\;=\;}\lim_{U\rightarrow \infty}\frac1{\epsilon_n^2U^3}\mathbb{E}\left[\int_{0}^{1}w_j(u)\mathcal{L}_{n,U}(u)e^{-iUux_j}\mathrm{d}u \overline{\int_{0}^{1}w_k(v)\mathcal{L}_{n,U}(v)e^{-iUvx_j}\mathrm{d}v}\right]\nonumber\\
&=\frac{2\pi\rho(\theta_j)^2}{T^2\exp(2T(\gamma-\lambda))}\int_{-\infty}^{\infty
} \left(-u^2w_j(u)\mathbf{1}_{[0,1]}(u)\right) \left(-u^2w_k(u)
 \mathbf{1}_{[0,1]}(u)\right) \mathrm{d}u\nonumber\\
&=\frac{2\pi\rho(\theta_j)^2}{T^2\exp(2T(\gamma-\lambda))}\int_{0}^{1}  u^4
w_j(u)w_k(u)\mathrm{d}u.\label{conjugated}
\end{align}
Without taking the complex conjugate in \eqref{convolution_I} we obtain
\begin{align}
&{T^2e^{2T(\gamma-\lambda)}}\mathbb{E}\left[\int_0^1w_j(u)\mathcal{L}_{n,U}(u)e^{-iUux_j}\mathrm{d}u \int_0^1w_k(v)\mathcal{L}_{n,U}(v)e^{-iUvx_j}\mathrm{d}v\right]\nonumber\\
&=2\pi\int_{-\infty}^{\infty} f_U(u)
\overline{(\overline{g_U(-v)} *
\mathcal{F}^{-1}(\rho(y/U+\theta_j)^2)(v))(u)}\mathrm{d}u.\nonumber
\end{align}
The same argumentation as before leads to
\begin{align}
&\phantom{\;=\;}\lim_{U\rightarrow \infty}\mathbb{E}\left[\int_{0}^{1}w_j(u)\mathcal{L}_{n,U}(u)e^{-iUux_j}\mathrm{d}u \int_{0}^{1}w_k(v)\mathcal{L}_{n,U}(v)e^{-iUvx_j}\mathrm{d}v\right]\nonumber\\
&=\frac{2\pi\rho(\theta_j)^2}{T^2\exp(2T(\gamma-\lambda))}\int_{-\infty}^{\infty
} \left(-u^2w_j(u)\mathbf{1}_{[0,1]}(u)\right) \left(-u^2w_k(-u)
\mathbf{1}_{[0,1]}(-u)\right) \mathrm{d}u\nonumber\\
&=0.\label{not_conjugated}
\end{align}
We combine \eqref{conjugated} and \eqref{not_conjugated} to obtain
\begin{align*}
&\phantom{\;=\;}\lim_{U\rightarrow \infty} \frac1{\epsilon_n^2U^3} \mathbb{E}\left[\mathrm{Re}\int_{0}^{1}\frac{w_j(u)\mathcal{L}_{n,U}(u)}{\exp(iUux_j)}\mathrm{d}u \:\mathrm{Re}\int_{0}^{1}\frac{w_k(u)\mathcal{L}_{n,U}(u)}{\exp(iUux_j)}\mathrm{d}v\right]\\
&=\lim_{U\rightarrow \infty}\frac1{\epsilon_n^2U^3} \mathbb{E}\left[\mathrm{Im}\int_{0}^{1}\frac{w_j(u)\mathcal{L}_{n,U}(u)}{\exp(iUux_j)}\mathrm{d}u \:\mathrm{Im}\int_{0}^{1}\frac{w_k(u)\mathcal{L}_{n,U}(u)}{\exp(iUux_j)}\mathrm{d}v\right]\\
&=\frac{\pi\rho(\theta_j)^2}{T^2\exp(2T(\gamma-\lambda))}\int_{0}^{1}  u^4
w_j(u)w_k(u)\mathrm{d}u.
\end{align*}
From \eqref{conjugated} and \eqref{not_conjugated} it also follows that the covariance between real and imaginary part vanishes asymptotically.

In the case $x_j\ne x_k$ we have to show that the covariance vanishes asymptotically. Without loss of generality we assume $x_j<x_k$. We define $\theta:=(\theta_j+\theta_k)/2$. Then $\theta_j<\theta<\theta_k$.
\begin{align*}
&\frac{T^2e^{2T(\gamma-\lambda)}}{\epsilon_n^2U^3} \mathbb{E}\left[\int_0^1w_j(u)\mathcal{L}_{n,U}(u)e^{-iUux_j}\mathrm{d}u \overline{\int_0^1w_k(v)\mathcal{L}_{n,U}(v)e^{-iUvx_k}\mathrm{d}v}\right]\\
&= \int_{-\infty}^\infty
\mathcal{F}f_U(U(x-\theta_j))\overline{\mathcal{F}g_U(U(x-\theta_k))}
\rho(x)^2U\mathrm{d}x\\
&= \int_{-\infty}^\infty
\mathcal{F}f_U(y+U(\theta-\theta_j))\overline{\mathcal{F}
g_U(y+U(\theta-\theta_k))} \rho(y/U+\theta)^2\mathrm{d}y
\end{align*}
By the Plancherel identity and by the dominated convergence theorem
\begin{align*}
\mathcal{F}f_U
\rightarrow \mathcal{F}\left(-u^2 w_j(u)\right)\quad\text{and}\quad\mathcal{F}g_U
\rightarrow \mathcal{F}\left(-u^2 w_k(u)\right)
\end{align*}
in $L^2(\R)$ for $U\to\infty$ and especially the $L^2(\R)$ norms converge. From
the assumption $\mathcal{F}\rho^2\in L^1(\R)$ follows that $\rho^2\in
L^\infty(\R)$. By the Cauchy--Schwarz inequality
\begin{align*}
&\lim_{U\to\infty}\left|\int_{-\infty}^0
\mathcal{F}f_U(y+U(\theta-\theta_j))\overline{\mathcal{F}
g_U(y+U(\theta-\theta_k))} \rho(y/U+\theta)^2\mathrm{d}y\right|\\
&\le\lim_{U\to\infty}\|\rho\|_\infty^2\left\|\mathcal{F} f_U\right\|_{L^2(\R)}
\left(\int_{-\infty}^{U(\theta-\theta_k)}\left|\mathcal{F} g_U(y)\right|^2\mathrm{d}y\right)^{1/2}=0.
\end{align*}
A similar calculation shows that the integral over $(0,\infty)$ converges to zero and consequently
\begin{align*}
&\lim_{U\to\infty}\frac{T^2e^{2T(\gamma-\lambda)}}{\epsilon_n^2U^3} \mathbb{E}\left[\int_0^1\frac{w_j(u)\mathcal{L}_{n,U}(u)}{\exp(iUux_j)}\mathrm{d}u \overline{\int_0^1\frac{w_k(v)\mathcal{L}_{n,U}(v)}{\exp(iUvx_k)}}\mathrm{d}v\right]=0.
\end{align*}
The same way follows
\begin{align*}
&\lim_{U\to\infty}\frac{T^2e^{2T(\gamma-\lambda)}}{\epsilon_n^2U^3} \mathbb{E}\left[\int_0^1\frac{w_j(u)\mathcal{L}_{n,U}(u)}{\exp(iUux_j)}\mathrm{d}u \int_0^1\frac{w_k(v)\mathcal{L}_{n,U}(v)}{\exp(iUvx_k)}\mathrm{d}v\right]=0.
\end{align*}

The $1/(\epsilon_n U^{3/2}) \int_{0}^{1}w_j(u)\mathcal{L}_{n,U}(u)e^{-iUux_j}\mathrm{d}u$ are centered normal random variables and their covariance matrix converges to the covariance matrix of the claimed limit. Thus, the characteristic function converges pointwise.
By L\'evy's continuity theorem this shows the convergence in distribution.
\qed
\end{proof}

\begin{lemma}\label{covariances}
Let $\sigma>0$ and $\rho \in L^\infty(\R)$.
Let $w_U, \tilde w_U\in L^\infty ([0,1],\C)$ be Riemann--integrable and let
there be a constant $C>0$ such that $\|w_U\|_\infty, \|\tilde w_U\|_\infty\le C$
for all $U\ge1$.
Let there be $a,\tilde a: [1,\infty)\rightarrow \C$ such that the condition
\begin{equation}\label{condition}
\lim_{\delta\to 0}\sup_{U\ge 1}\sup_{u\in[1-\delta/U,1]}|w_U(u)-a(U)|=0
\end{equation}
and the corresponding condition for $\tilde w_U$ and $\tilde a$ hold.
Then
\begin{align}
\lim_{U\to\infty} & \frac{1}{\epsilon_n^2 \exp(T\sigma^2U^2)} \mathbb{E}\left[\int_0^1w_U(u)\mathcal{L}_{n,U}(u)\mathrm{d}u \int_0^1\tilde w_U(v)\mathcal{L}_{n,U}(v)\mathrm{d}v\right]=0,\nonumber\\
\lim_{U\to\infty} \biggl(&\frac{1}{\epsilon_n^2 \exp(T\sigma^2U^2)} \mathbb{E}\left[\int_0^1w_U(u)\mathcal{L}_{n,U}(u)\mathrm{d}u \overline{\int_0^1 \tilde w_U(v)\mathcal{L}_{n,U}(v)\mathrm{d}v}\right]\nonumber\\
&-\frac{a(U)\overline{\tilde a(U)}\int_{-\infty}^\infty
\rho(y)^2\mathrm{d}y}{\exp(2T(\sigma^2/2+\gamma-\lambda)) T^4
\sigma^4}\biggr)=0.\nonumber
\end{align}
\end{lemma}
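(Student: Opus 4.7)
My plan is to convert each covariance into a deterministic double integral by Fubini and the It\^o isometry, then reduce the resulting expression to a Laplace-type integral via the change of variables $s=T\sigma^2U^2(1-u)$, $t=T\sigma^2U^2(1-v)$, which zooms into the neighbourhood of $u=v=1$ where the factor $\exp(T\sigma^2U^2(u^2+v^2)/2)$ concentrates all the mass.

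Since $\mathbb{E}[\sup_{[0,1]}|\mathcal{L}_{\epsilon,U}|^2]<\infty$ by Proposition~\ref{sup L}, Fubini applies. Combined with the It\^o isometries $\mathbb{E}[\int f\,dW\,\overline{\int g\,dW}]=\int f\bar g$ and $\mathbb{E}[\int f\,dW\int g\,dW]=\int fg$ for complex deterministic integrands, and with the identity $1+iUu(1+iUu)\mathcal{FO}(Uu)=\exp(T\psi(Uu))$, both covariances take the form
\begin{equation*}
\int_0^1\!\!\int_0^1 w_U(u)\,\overline{\tilde w_U(v)}\,A_U(u,v)\,\mathcal{F}\delta^2(U(u\mp v))\,du\,dv,
\end{equation*}
with sign $-$ in the conjugated case and sign $+$ in the unconjugated case (where the conjugation of $\tilde w_U$ is also dropped). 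Here $A_U$ carries the polynomial factor $iUu(1+iUu)\,\overline{iUv(1+iUv)}/T^2$ and the exponential $\exp(-T\psi(Uu)-T\overline{\psi(Uv)})$; the L\'evy--Khintchine decomposition of $\psi$ exhibits the dominant Gaussian factor $\exp(T\sigma^2U^2(u^2+v^2)/2)$ inside $A_U$.

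After extracting $\exp(T\sigma^2U^2)$ and substituting $s,t$ as above, the identity $T\sigma^2U^2u^2/2=T\sigma^2U^2/2-s+s^2/(2T\sigma^2U^2)$ converts the Gaussian factor to $e^{-(s+t)}e^{(s^2+t^2)/(2T\sigma^2U^2)}$, and the elementary bound $s^2/(2T\sigma^2U^2)\le s/2$ valid for $s\in[0,T\sigma^2U^2]$ yields the integrable dominating function $e^{-(s+t)/2}$ (times the uniformly bounded factors $\|w_U\|_\infty\|\tilde w_U\|_\infty\le C^2$, $|\mathcal{F}\delta^2|\le\|\delta\|_{L^2}^2$, $\exp(T\|\mathcal{F}\mu\|_\infty)$, and $|iUu(1+iUu)/U^2|\le\sqrt 2$). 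For the pointwise limit with $(s,t)$ fixed I would check that the polynomial ratio $iUu(1+iUu)\overline{iUv(1+iUv)}/U^4$ tends to $1$, that $\mathcal{F}\mu(Uu),\mathcal{F}\mu(Uv)\to 0$ by Riemann--Lebesgue, that the oscillatory phase $\exp(-iT(\sigma^2+\gamma)U(u-v))$ tends to $1$ because $U(u-v)=(t-s)/(T\sigma^2U)\to 0$, and that $\mathcal{F}\delta^2(U(u-v))\to\|\delta\|_{L^2}^2$ by continuity; condition~\eqref{condition} forces $w_U(u)\to a(U)$ and $\tilde w_U(v)\to\tilde a(U)$ uniformly on bounded $(s,t)$-sets as soon as $U$ is large enough that $[1-M/(T\sigma^2U^2),1]\subset[1-\delta/U,1]$. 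Combining with the Jacobian $du\,dv=(T\sigma^2U^2)^{-2}ds\,dt$ and $\iint_0^\infty e^{-s-t}\,ds\,dt=1$ produces the prefactor $(T^4\sigma^4e^{2T(\sigma^2/2+\gamma-\lambda)})^{-1}\,a(U)\overline{\tilde a(U)}\,\|\delta\|_{L^2}^2$ claimed. In the unconjugated case the Fourier argument becomes $U(u+v)=2U-(s+t)/(T\sigma^2U)\to\infty$ pointwise, so $\mathcal{F}\delta^2(U(u+v))\to 0$ by Riemann--Lebesgue (using $\delta\in L^2$, hence $\delta^2\in L^1$), and dominated convergence against the same $e^{-(s+t)/2}$ yields the limit zero.

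The main technical obstacle is that $a(U)$ has no prescribed limit, so the limit $U\to\infty$ cannot be interchanged with the integral in the naive way. I would handle this by localising: split the zoomed domain into $\{s,t\le M\}$ and its complement, write $w_U(u)=a(U)+r_U$ with $\sup_{s,t\le M}|r_U|\to 0$ for each fixed $M$ by~\eqref{condition}, isolate the contribution $a(U)\overline{\tilde a(U)}\iint e^{-(s+t)}\,ds\,dt$ (which is exactly the stated main term once the remaining constants are reinstated), bound the cross-terms on the bounded part by the uniformly small $r_U$, and control the tails on $\{s\vee t>M\}$ by $\iint_{s\vee t>M}e^{-(s+t)/2}\,ds\,dt$, which is made arbitrarily small by choosing $M$ large before letting $U\to\infty$. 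This localisation is the only place where the uniformity in $U$ required by~\eqref{condition} is used in an essential way.
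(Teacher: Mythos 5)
Your proof is correct and reaches the right constants, but it implements the key concentration step differently from the paper. Both arguments start identically (Fubini plus the It\^o isometry turn each covariance into a deterministic double integral over $[0,1]^2$ against $\mathcal{F}\delta^2(U(u\mp v))$, with the Gaussian factor $e^{T\sigma^2U^2(u^2+v^2)/2}$ concentrating all mass at the corner $u=v=1$, where condition~\eqref{condition} identifies the limit as $a(U)\overline{\tilde a(U)}$ times a constant, and Riemann--Lebesgue kills the unconjugated case). The paper, however, first passes through Plancherel to a convolution form \eqref{covariance} and then runs a refined approximate-identity argument: the explicit antiderivative computations \eqref{dirac}--\eqref{two} show that only the shrinking square $[1-\delta/U,1]^2$ contributes, after which all slowly varying factors are argued to be nearly constant there, in the two-parameter limit ``first $U\to\infty$, then $\delta\to0$''. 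You instead rescale globally via $s=T\sigma^2U^2(1-u)$, $t=T\sigma^2U^2(1-v)$, which turns the boundary layer into a fixed Laplace-type integral with the clean dominating function $e^{-(s+t)/2}$ (your bound $s^2/(2T\sigma^2U^2)\le s/2$ on the support is the right observation), and you correctly recognise that plain dominated convergence fails because $a(U)$ has no limit, handling this by the $\{s,t\le M\}$ localisation; implicitly you also use that $|a(U)|$ is bounded, which follows from $\|w_U\|_\infty\le C$ and \eqref{condition}. The substitution route buys a more standard and arguably more transparent dominated-convergence argument in place of the paper's bespoke limit computations, at the cost of having to manage the interchange of $U\to\infty$ with the unbounded rescaled domain and the non-convergent factor $a(U)\overline{\tilde a(U)}$ by hand; the paper's version keeps everything on $[0,1]^2$ and never needs a dominating function on an infinite domain. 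Both are complete proofs of the lemma.
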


\begin{remark}
Obviously $a(U):=w_U(1)$ is the only possible definition. Thus, $a$ describes the dependence of $w_U$ on $U$ at one.
\end{remark}

\begin{proof}
We notice that \eqref{ito} applies to the complex--valued functions and yields for $w_j:=w_U$ and $w_k:=\tilde w_U$ with the definitions \eqref{f} and \eqref{g} of $f_U$ and $g_U$, respectively, and with $\theta:=T\sigma^2+T\gamma$
\begin{align}
&\quad{T^2e^{2T(\sigma^2/2+\gamma-\lambda)}} \mathbb{E}\left[\int_0^1w_U(u)\mathcal{L}_{n,U}(u)\mathrm{d}u \overline{\int_0^1\tilde w_U(v)\mathcal{L}_{n,U}(v)\mathrm{d}v}\right]\nonumber\\
=&{\epsilon_n^2U^4} \int_{-\infty}^\infty \mathcal{F}(f_U(u)e^{T\sigma^2U^2u^2/2})(U(x-\theta))
\overline{\mathcal{F}(g_U(v)e^{T\sigma^2U^2v^2/2})(U(x-\theta))} \rho(x)^2
\mathrm{d}x\nonumber\\
=&{2\pi\epsilon_n^2U^3} \int_{-\infty}^\infty f_U(u)e^{T\sigma^2U^2u^2/2}
\left(\overline{g_U(v)}e^{T\sigma^2U^2v^2/2}*\overline{\mathcal{F}^{-1}(
\rho(y/U+\theta)^2)(v)}\right)(u) \mathrm{d}u\nonumber\\
=&{2\pi\epsilon_n^2U^4} \int_{0}^1 \int_0^1 f_U(u)\overline{g_U(v)}
\overline{\mathcal{F}^{-1}( \rho(y+\theta)^2)(U(u-v))}
e^{T\sigma^2U^2(u^2+v^2)/2}\mathrm{d}u\mathrm{d}v.\label{covariance}
\end{align}
For all $\delta>0$ we have
\begin{align}
&\phantom{\;=\;}
\lim_{U\to\infty}T\sigma^2U^2e^{-T\sigma^2U^2/2}\int_{1-\delta/U}^1u e^{T\sigma^2U^2u^2/2}\mathrm{d}u\nonumber\\
&
=\lim_{U\to\infty}e^{-T\sigma^2U^2/2}\left[ e^{T\sigma^2U^2u^2/2}\right]_{1-\delta/U}^1
=1-\lim_{U\to\infty}e^{-T\sigma^2U \delta+T\sigma^2\delta^2/2}=1.\label{dirac}
\end{align}
For the product of two such sequences we obtain for all $\delta>0$
\begin{align}
\lim_{U\to\infty}T^2\sigma^4U^4e^{-T\sigma^2U^2}\int_{1-\delta/U}^1\int_{1-\delta/U}^1uv e^{T\sigma^2U^2(u^2+v^2)/2}\mathrm{d}u\mathrm{d}v=1.\label{one}
\end{align}
Likewise
\begin{align}
\lim_{U\to\infty}T^2\sigma^4U^4e^{-T\sigma^2U^2}\int_{0}^1\int_{0}^1uv e^{T\sigma^2U^2(u^2+v^2)/2}\mathrm{d}u\mathrm{d}v=1\label{two}
\end{align}
holds. We scale the integral in \eqref{covariance} appropriately:
\begin{align}
&\phantom{\;=\;}\lim_{U\to\infty}\biggl({T^2\sigma^4U^4e^{-T\sigma^2U^2}} \int_{0}^1 \int_0^1 f_U(u)\overline{g_U(v)}\label{limit_I}\\
&\phantom{\;=\;}
\overline{\mathcal{F}^{-1}( \rho(y+\theta)^2)(U(u-v))}
e^{T\sigma^2U^2(u^2+v^2)/2}\mathrm{d}u\mathrm{d}v
-a(U)\overline{\tilde a(U)} \frac1{2\pi}
\int_{-\infty}^{\infty}\rho(y)^2\mathrm{d}y\biggr)
\ifodd\switch
\nonumber\\
\else
\nonumber\displaybreak[0]\\
\fi
&=\lim_{U\to\infty}\biggl({T^2\sigma^4U^4e^{-T\sigma^2U^2}} \int_{0}^1 \int_0^1 uve^{T\sigma^2U^2(u^2+v^2)/2}\label{limit_II}\\
&\phantom{\;=\;}\overline{\mathcal{F}^{-1}( \rho(y+\theta)^2)(U(u-v))}
f_U(u)\overline{g_U(v)}/(uv)\mathrm{d}u\mathrm{d}v
-a(U)\overline{\tilde a(U)} \frac1{2\pi}
\int_{-\infty}^{\infty}\rho(y)^2\mathrm{d}y\biggr)\nonumber\\
\intertext{We recall that in the Gaussian white noise model we assumed $\rho$ to
be in $L^2(\R)$. Since
$\overline{\mathcal{F}^{-1}(\rho(y+\theta)^2)(U(u-v))}f_U(u)\overline{g_U(v)}
/(uv)$ is bounded in $L^\infty([0,1]^2)$ independently of $U$ for $U\ge1$ and
since the difference between \eqref{two} and \eqref{one} is zero, only the
integral over $[1-\delta/U,1]^2$ contributes to the limit. For all $\delta>0$
the limit equals}
&=\lim_{U\to\infty}\biggl(T^2\sigma^4U^4e^{-T\sigma^2U^2}\int_{1-\delta/U}^{1} \int_{1-\delta/U}^1 uve^{T\sigma^2U^2(u^2+v^2)/2}\label{dirac limit}\\
&\phantom{\;=\;} \overline{\mathcal{F}^{-1}(\rho(y+\theta)^2)(U(u-v))}
f_U(u)\overline{g_U(v)}/(uv)\mathrm{d}v\mathrm{d}u
-a(U)\overline{\tilde a(U)} \frac1{2\pi}
\int_{-\infty}^{\infty}\rho(y)^2\mathrm{d}y\biggr)\nonumber\\
&=0,\nonumber
\end{align}
which can be seen the following way.
$\mathcal{F}^{-1}(\rho(y+\theta)^2)$ is continuous and we
have $|U(u-v)|\le\delta$ for
all $u,v\in[1-\delta/U,1]$. 
$\mathcal{F}^{-1}(\rho(y+\theta)^2)(U(u-v))$ gets arbitrarily close to
$\mathcal{F}^{-1}(\rho(y+\theta)^2)(0)=(1/2\pi)\int_{-\infty}
^\infty\rho(y)^2\mathrm{d}y$ by choosing $\delta$ small enough. By
\eqref{condition}, $w_U(u)$ tends to $a(U)$
and $\tilde w_U(v)$ tends to $\tilde a(U)$ for $\delta$ tending to zero. By
choosing $U$ large the factor $(-u+i/U)/\exp(T\mathcal{F}\mu(Uu))$ gets close to
minus one for all $u\in[1-\delta/U,1]$.
Thus, for small $\delta$ and large $U$ the term $f_U(u)\overline{g_U(v)}/(uv)$ is close to $a(U)\overline{\tilde a(U)}$ for all $u,v\in[1-\delta/U,1]$.

Rescaling \eqref{covariance} and taking the limit $U\to\infty$ leads to
\begin{align}
&\phantom{\;=\;}\lim_{U\to\infty} \biggl(\frac{1}{\epsilon_n^2 \exp(T\sigma^2U^2)} \mathbb{E}\left[\int_0^1w_U(u)\mathcal{L}_{n,U}(u)\mathrm{d}u \overline{\int_0^1 \tilde w_U(v)\mathcal{L}_{n,U}(v)\mathrm{d}v}\right]\nonumber\\
&\phantom{\;=\;}-\frac{a(U)\overline{\tilde a(U)}\int_{-\infty}^\infty
\rho(y)^2\mathrm{d}y}{\exp(2T(\sigma^2/2+\gamma-\lambda)) T^4
\sigma^4}\biggr)\nonumber\\
&=\lim_{U\to\infty}\biggl(\frac{2\pi U^4\exp(-T\sigma^2U^2)}{T^2\exp(2T(\sigma^2/2+\gamma-\lambda))} \int_{0}^1 \int_0^1 f_U(u)\overline{g_U(v)}\nonumber\\
&\phantom{\;=\;}\overline{\mathcal{F}^{-1}( \rho(y+\theta_0)^2)(U(u-v))}
e^{T\sigma^2U^2(u^2+v^2)/2}\mathrm{d}u\mathrm{d}v\nonumber\\
&\phantom{\;=\;}-\frac{2\pi}
{\exp(2T(\sigma^2/2+\gamma-\lambda))}\frac{a(U)\overline{\tilde
a(U)}}{T^4\sigma^4 2\pi} \int_{-\infty}^\infty
\rho(y)^2\mathrm{d}y\biggr)=0,\label{with_conjugation}
\end{align}
where we used that \eqref{limit_I} is zero.
By \eqref{ito_II} with $y=U(x-\theta)$ we have
\begin{align*}
&\quad{T^2e^{2T(\sigma^2/2+\gamma-\lambda)}} \mathbb{E}\left[\int_0^1w_U(u)\mathcal{L}_{n,U}(u)\mathrm{d}u \int_0^1\tilde w_U(v)\mathcal{L}_{n,U}(v)\mathrm{d}v\right]\nonumber\\
=&{\epsilon_n^2U^3} \int_{-\infty}^\infty \mathcal{F}(f_U(u)e^{T\sigma^2U^2u^2/2})(y)
\overline{\mathcal{F}(\overline{g_U(-v)}e^{T\sigma^2U^2v^2/2})(y) }
\rho(y/U+\theta)^2 \mathrm{d}y\nonumber\displaybreak[0]\\
=&{2\pi\epsilon_n^2U^3} \int_{-\infty}^\infty f_U(u)e^{T\sigma^2U^2u^2/2}
\left(g_U(-v)e^{T\sigma^2U^2v^2/2}*\overline{\mathcal{F}^{-1}(
\rho(y/U+\theta)^2)(v)}\right)(u) \mathrm{d}u\displaybreak[0]\nonumber\\
=&{2\pi\epsilon_n^2U^4} \int_{0}^1 \int_{-1}^0 f_U(u)g_U(-v)
\overline{\mathcal{F}^{-1}( \rho(y+\theta)^2)(U(u-v))}
e^{T\sigma^2U^2(u^2+v^2)/2}\mathrm{d}v\mathrm{d}u\nonumber\\
=&{2\pi\epsilon_n^2U^4} \int_{0}^1 \int_{0}^1 uv e^{T\sigma^2U^2(u^2+v^2)/2}
\mathcal{F}^{-1}( \rho(y+\theta)^2)(-U(u+v))\frac{f_U(u)g_U(v)}{uv}
\mathrm{d}v\mathrm{d}u.
\end{align*}
Rescaling as in \eqref{with_conjugation} leads to
\begin{align}
&\phantom{\;=\;}\lim_{U\to\infty} \frac{1}{\epsilon_n^2 \exp(T\sigma^2U^2)} \mathbb{E}\left[\int_0^1w_U(u)\mathcal{L}_{n,U}(u)\mathrm{d}u \int_0^1\tilde w_U(v)\mathcal{L}_{n,U}(v)\mathrm{d}v\right]\nonumber\\
&=\lim_{U\to\infty}\frac{2\pi U^4\exp(-T\sigma^2U^2)}{T^2\exp(2T(\sigma^2/2+\gamma-\lambda))} \int_{0}^1 \int_0^1 uv e^{T\sigma^2U^2(u^2+v^2)/2}\nonumber\\
&\phantom{\;=\;}\mathcal{F}^{-1}( \rho(y+\theta)^2)(-U(u+v)) f_U(u) g_U(v)/(uv)
\mathrm{d}u\mathrm{d}v=0,\label{without_conjugation}
\end{align}
since $\mathcal{F}^{-1}(\rho(y+\theta)^2)(u)\to 0$ for $|u|\to\infty$.
\qed
\end{proof}

\begin{lemma}\label{linearized_>0}
Let $\sigma>0$, $\rho\in L^\infty(\R)$ and
$x_0\in\R$. For $j=1,\dots,n $ let $w_j:[0,1]\to\R$ be continuous at
one, Riemann--integrable and in $L^\infty([0,1])$.
Then
\[\frac1{\epsilon_n \exp(T\sigma^2U^2/2)}\int_{0}^{1}w_j(u)\mathcal{L}_{n,U}(u)e^{-iUux_0}\mathrm{d}u\] converge jointly in distribution to
\[\frac{\|\rho\|_{L^2(\R)} w_j(1)}
{\sqrt{2}\exp(T(\sigma^2/2+\gamma-\lambda))T^2\sigma^2}\left(W+iV\right) \]
as $U\rightarrow \infty$, where $W$ and $V$ are independent standard normal random variables.
\end{lemma}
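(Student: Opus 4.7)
The plan is to (i) note that the displayed random variables form a centered complex Gaussian vector at each $U$, (ii) compute the limiting covariance structure by applying Lemma~\ref{covariances} to the weights $w_j(u)e^{-iUux_0}$, and (iii) conclude via L\'evy's continuity theorem. The Riemann-sum argument preceding Lemma~\ref{covariances} extends verbatim to complex-valued integrands by treating real and imaginary parts separately, so for each fixed $U$ the complex random variables
\[
Z_j:=\int_0^1 w_j(u)\mathcal{L}_{\epsilon,U}(u)e^{-iUux_0}\mathrm{d}u,\quad j=1,\dots,n,
\]
are jointly complex Gaussian with mean zero. Writing $Z_j=X_j+iY_j$, the joint distribution of the $2n$-vector $(X_1,\dots,X_n,Y_1,\dots,Y_n)$ is entirely determined by the Hermitian covariances $\mathbb{E}[Z_j\overline{Z_k}]$ and the pseudo-covariances $\mathbb{E}[Z_jZ_k]$.

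To apply Lemma~\ref{covariances} I set $w_U(u):=w_j(u)e^{-iUux_0}$, which is complex-valued, Riemann-integrable, and bounded uniformly in $U\ge 1$ by $\|w_j\|_\infty$, and verify condition~\eqref{condition} with $a(U):=w_j(1)e^{-iUx_0}$. For $u\in[1-\delta/U,1]$ one has
\[
|w_j(u)e^{-iUux_0}-w_j(1)e^{-iUx_0}|\le|w_j(u)-w_j(1)|+\|w_j\|_\infty|Ux_0(u-1)|,
\]
where the first summand is controlled uniformly in $U\ge 1$ by the continuity of $w_j$ at one (since $[1-\delta/U,1]\subseteq[1-\delta,1]$) and the second is bounded by $\|w_j\|_\infty|x_0|\delta$; both tend to $0$ as $\delta\to 0$. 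Using $|e^{-iUx_0}|=1$, so that $a(U)\overline{\tilde a(U)}=w_j(1)w_k(1)$, Lemma~\ref{covariances} yields
\[
\frac{\mathbb{E}[Z_jZ_k]}{\epsilon^2 e^{T\sigma^2U^2}}\to 0,\qquad \frac{\mathbb{E}[Z_j\overline{Z_k}]}{\epsilon^2 e^{T\sigma^2U^2}}\to \frac{w_j(1)w_k(1)\|\delta\|_{L^2(\mathbbm{R})}^2}{\exp(2T(\sigma^2/2+\gamma-\lambda))T^4\sigma^4}.
\]

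The vanishing pseudo-covariance combined with the reality of the Hermitian-covariance limit forces $\mathbb{E}[X_jX_k]=\mathbb{E}[Y_jY_k]$ and $\mathbb{E}[X_jY_k]=0$ in the limit, so
\[
\frac{\mathbb{E}[X_jX_k]}{\epsilon^2 e^{T\sigma^2U^2}},\ \frac{\mathbb{E}[Y_jY_k]}{\epsilon^2 e^{T\sigma^2U^2}}\to \frac{w_j(1)w_k(1)\|\delta\|_{L^2(\mathbbm{R})}^2}{2\exp(2T(\sigma^2/2+\gamma-\lambda))T^4\sigma^4}
\]
and $\mathbb{E}[X_jY_k]/(\epsilon^2e^{T\sigma^2U^2})\to 0$. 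This is precisely the covariance structure of the vector whose $j$-th component equals $\frac{\|\delta\|_{L^2(\mathbbm{R})}w_j(1)}{\sqrt{2}\exp(T(\sigma^2/2+\gamma-\lambda))T^2\sigma^2}(W+iV)$ for a common pair $(W,V)$ of independent standard normals: the rank-one factors $w_j(1)w_k(1)$ reflect the sharing of $W$ across real parts and $V$ across imaginary parts, while independence of $W$ and $V$ produces vanishing real-imaginary covariances. Joint Gaussianity at each $U$ plus convergence of the covariance matrix gives pointwise convergence of the characteristic function, and L\'evy's continuity theorem delivers the claimed joint convergence.

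The delicate point is the verification of condition~\eqref{condition}: the interval $[1-\delta/U,1]$ of length $\delta/U$ is exactly calibrated to the oscillation rate $U|x_0|$ of $e^{-iUux_0}$, so the phase varies by at most $\delta|x_0|$ on this interval---a uniformity in $U$ that is essential for Lemma~\ref{covariances} to extract the boundary value $w_j(1)$ from the Dirac-type concentration of the Gaussian weight $e^{T\sigma^2U^2u^2/2}$ near $u=1$.
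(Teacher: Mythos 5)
Your proposal is correct and follows essentially the same route as the paper: define $w_U(u)=w_j(u)e^{-iUux_0}$ with $a(U)=w_j(1)e^{-iUx_0}$, verify condition~\eqref{condition} via the uniform control of the phase on $[1-\delta/U,1]$, apply Lemma~\ref{covariances} to both the Hermitian and the pseudo-covariances, and conclude with L\'evy's continuity theorem. Your explicit derivation of the real/imaginary covariance structure from the vanishing pseudo-covariance is a welcome elaboration of a step the paper leaves implicit, but it is not a different argument.
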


\begin{proof}
\ifodd\switch
The proof relies on Lemma~\ref{covariances}.
We define $w_U(u):=w_j(u)/\exp(iUux_0)$ and $\tilde
w_U(u):=w_k(u)/\exp(iUux_0)$. Further we set $a(U):=w_j(1)/\exp(iUx_0)$ and
define $\tilde a(U):=w_k(1)/\exp(iUx_0)$. Then we apply Lemma~\ref{covariances}.
Condition~\eqref{condition} is satisfied since $w_j$ and $w_k$ are continuous at
one and since
\else
The proof relies on Lemma~\ref{covariances}.
We define $w_U(u):=w_j(u)/\exp(iUux_0)$ and $\tilde
w_U(u):=w_k(u)/\exp(iUux_0)$. Further we set $a(U):=w_j(1)/\exp(iUx_0)$ and
$\tilde a(U):=w_k(1)/\exp(iUx_0)$. Then we apply Lemma~\ref{covariances}.
Condition~\eqref{condition} is satisfied since $w_j$ and $w_k$ are continuous at
one and since
\fi
\[\exp(-iUux_0)=\exp(-iUx_0)\exp(iU(1-u)x_0),
\]
where $U(1-u)\le\delta$ for
$u\in[1-\delta/U,1]$. We note that $a(U)\overline{\tilde a(U)}=w_j(1)w_k(1)$ is
real. By Lemma~\ref{covariances} the covariances converge to the covariances of
the claimed limit. The convergence in distribution follows by L\'evy's
continuity theorem.
\qed
\end{proof}

\begin{lemma}\label{linearized_no_convergence}
Let $\sigma>0$ and $\rho\in L^\infty(\R)$.
Take $w_1,w_2:[0,1]\to\R$ to be Riemann--integrable, in $L^\infty([0,1])$ and
continuous at one.
Let $x_1,x_2\in\R$ and denote $x_2-x_1$ by $\varphi$.
Then
\begin{align*}
  \frac{1}{\epsilon_n e^{T\sigma^2U^2/2}}\biggl(w_1(1)\int_0^1 \frac{w_2(u) \mathcal{L}_{n,U}(u)}{e^{iUux_2}}\mathrm{d}u
  -\frac{w_2(1)}{e^{iU\varphi}}\int_0^1 \frac{w_1(u)
\mathcal{L}_{n,U}(u)}{e^{iUux_1}}\mathrm{d}u\biggr)\xrightarrow{\mathbb{P}} 0
\end{align*}
as $U\to\infty$.
\end{lemma}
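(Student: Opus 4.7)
The plan is to rewrite the bracketed expression as a single linear functional of $\mathcal{L}_{\epsilon,U}$ and then apply Lemma~\ref{covariances}. Since the result is a centered complex Gaussian random variable (as a stochastic integral against Brownian motion), convergence in probability will follow from convergence of the second moment to zero.

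First I would define
\[
w_U(u) := w_1(1)\, w_2(u)\, e^{-iUu x_2} - w_2(1)\, e^{-iU\varphi}\, w_1(u)\, e^{-iUu x_1},
\]
so that the expression in parentheses in the statement equals $\int_0^1 w_U(u)\mathcal{L}_{\epsilon,U}(u)\mathrm{d}u$. Since $w_1,w_2\in L^\infty([0,1])$ are Riemann-integrable and the exponentials are bounded continuous functions, $w_U$ is Riemann-integrable and uniformly bounded by $|w_1(1)|\|w_2\|_\infty+|w_2(1)|\|w_1\|_\infty$ independently of $U\ge1$. The key algebraic observation is that $\varphi+x_1=x_2$, whence
\[
w_U(1)=w_1(1)w_2(1)e^{-iUx_2}-w_2(1)w_1(1)e^{-iU(\varphi+x_1)}=0,
\]
so the natural choice in condition~\eqref{condition} is $a(U)\equiv 0$.

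Next I would verify condition~\eqref{condition}. Writing
\[
w_U(u)=e^{-iUx_2}\bigl[w_1(1)(w_2(u)-w_2(1))e^{iU(1-u)x_2}-w_2(1)(w_1(u)-w_1(1))e^{iU(1-u)x_1}+w_1(1)w_2(1)\bigl(e^{iU(1-u)x_2}-e^{iU(1-u)x_1}\bigr)\bigr],
\]
and setting $t:=U(1-u)\in[0,\delta]$ for $u\in[1-\delta/U,1]$, each of the three terms can be controlled uniformly in $U\ge1$: continuity of $w_1,w_2$ at one gives $|w_j(u)-w_j(1)|\to 0$ as $\delta\to 0$, and the estimates $|e^{itx_j}-1|\le\delta|x_j|$ and $|e^{itx_2}-e^{itx_1}|\le\delta|\varphi|$ take care of the oscillating factors. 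Hence $\sup_{U\ge 1}\sup_{u\in[1-\delta/U,1]}|w_U(u)|\to 0$ as $\delta\to 0$, so \eqref{condition} is satisfied with $a(U)=0$.

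Now I would apply Lemma~\ref{covariances} with $\tilde w_U:=w_U$ and $\tilde a=a\equiv 0$ to obtain
\[
\frac{1}{\epsilon^2 e^{T\sigma^2U^2}}\,\mathbb{E}\!\left[\Bigl|\int_0^1 w_U(u)\mathcal{L}_{\epsilon,U}(u)\mathrm{d}u\Bigr|^2\right]\longrightarrow 0
\]
as $U\to\infty$. Since the (complex) random variable inside is centered Gaussian, Chebyshev's inequality immediately yields convergence in probability to zero, which is the claim.

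The main obstacle is the uniform estimate in step two: the identity $w_U(1)=0$ relies on the relation $\varphi=x_2-x_1$, and obtaining uniformity in $U\ge1$ requires combining the continuity of the weights at one with the control of the rapidly oscillating exponentials, using that on $[1-\delta/U,1]$ the phase $U(1-u)$ stays in $[0,\delta]$ irrespective of $U$. Everything else is a routine application of the previously established covariance lemma.
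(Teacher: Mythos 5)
Your proof is correct and follows essentially the same route as the paper: the same definition of $w_U$, verification of condition~\eqref{condition} with $a(U)\equiv 0$ (which you carry out in more detail than the paper's one-line assertion), and an application of Lemma~\ref{covariances} to show the second moment vanishes. The only cosmetic difference is that you conclude via Chebyshev's inequality while the paper invokes L\'evy's continuity theorem; both are valid once the rescaled variance tends to zero.
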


\begin{proof}
This lemma is a consequence of Lemma~\ref{covariances}.
We define
\[w_U(u):=\frac{w_1(1)w_2(u)}{\exp(iUux_2)}-\frac{w_2(1)w_1(u)}{\exp(iU\varphi)\exp(iUux_1)}.\]
$w_U(u)$ fulfills condition~\eqref{condition} with $a(U)=0$ for all $U\ge1$. Lemma~\ref{covariances} yields
\[
\lim_{U\to\infty} \frac{1}{\epsilon_n^2 \exp(T\sigma^2U^2)} \mathbb{E}\left[\left|\int_0^1w_U(u)\mathcal{L}_{n,U}(u)\mathrm{d}u \right|^2\right]=0
\]
and the statement follows by L\'evy's continuity theorem.
\qed
\end{proof}

\subsection{The remainder term}

In this section, we show that the contribution of the remainder term to the estimation vanishes asymptotically. We recall that the remainder term $\mathcal{R}_{n,U}$ depends on the L\'evy triplet.

\begin{lemma}\label{linearization_>0}
Let $\sigma_0>0$.
Let $w_U\in L^\infty ([0,1],\C)$ be Riemann--integrable and let there be a constant $C>0$ such that $\|w_U\|_\infty\le C$ for all $U\ge1$.
If $\epsilon_n U_n^2\exp(T\sigma_0^2U_n^2/2) \to 0$ as $n\to \infty$, then for all L\'evy triplets with $\sigma\le\sigma_0$
\begin{align*}
\frac1{\epsilon_n\exp(T\sigma_0^2U_n^2/2)}&\int_{0}^1w_{U_n}(u) \mathcal{R}_{n,U_n}(u) \mathrm{d}u \xrightarrow{\mathbb{P}}0, \qquad \text{as } n\to\infty.
\end{align*}

\end{lemma}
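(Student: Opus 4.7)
The plan is to exploit the definition of $\mathcal{R}_{\epsilon,U}$ as the residual between a logarithm and its linearization: a quadratic Taylor bound $|\mathcal{R}_{\epsilon,U}(u)|\lesssim |\mathcal{L}_{\epsilon,U}(u)|^2$ should hold whenever $|T\mathcal{L}_{\epsilon,U}(u)|\le 1/2$, and this event has overwhelming probability by Proposition~\ref{sup L} combined with the hypothesis. The real work is then to estimate $\int_0^1|\mathcal{L}_{\epsilon,U}(u)|^2du$ in mean, where the key is that the $L^2$-mass of $\mathcal{L}_{\epsilon,U}$ concentrates in a window of width $\sim U^{-2}$ around $u=1$, so the average is strictly smaller than the square of the pointwise maximum; this gain is what allows us to beat the scaling $\epsilon\exp(T\sigma_0^2U^2/2)$.

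Concretely, I would first set $a_\epsilon:=\epsilon U(\epsilon)^2\sqrt{\log U(\epsilon)}\exp(T\sigma_0^2U(\epsilon)^2/2)$ and split along $\Omega_1:=\{\sup_{u\in[0,1]}T|\mathcal{L}_{\epsilon,U(\epsilon)}(u)|\le 1/2\}$. Proposition~\ref{sup L} with $q=2$ and Chebyshev give $\mathbb{Q}(\Omega_1^c)\lesssim a_\epsilon^2\to 0$. On $\Omega_1$ the argument $1+T\mathcal{L}_{\epsilon,U(\epsilon)}(u)$ has modulus at least $1/2\ge\kappa^{U(\epsilon)}(u)$, so the trimmed logarithm coincides with the principal branch, and the standard bound $|\log(1+z)-z|\le |z|^2$ for $|z|\le 1/2$ yields
\begin{equation*}
|\mathcal{R}_{\epsilon,U(\epsilon)}(u)|\le T|\mathcal{L}_{\epsilon,U(\epsilon)}(u)|^2,\qquad u\in[0,1].
\end{equation*}
Using $\|w_{U(\epsilon)}\|_\infty\le C$ then gives $\mathbf{1}_{\Omega_1}|\int_0^1 w_{U(\epsilon)}\mathcal{R}_{\epsilon,U(\epsilon)}du|\le CT\int_0^1|\mathcal{L}_{\epsilon,U(\epsilon)}|^2du$.

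The main technical step is to control $\mathbb{E}[\int_0^1|\mathcal{L}_{\epsilon,U}(u)|^2\,du]$. Applying the It\^o isometry to \eqref{eqLinearization} and using that, by \eqref{psi} and the L\'evy--Khintchine representation,
\begin{equation*}
|1+iUu(1+iUu)\mathcal{FO}(Uu)|=\exp\!\bigl(T(-\sigma^2(Uu)^2/2+\sigma^2/2+\gamma-\lambda+\mathrm{Re}\,\mathcal{F}\mu(Uu))\bigr),
\end{equation*}
together with $\|\mathcal{F}\mu\|_\infty<\infty$ and $\delta\in L^\infty$ (so $\|\delta\|_{L^2}<\infty$ on any relevant domain), one obtains
\begin{equation*}
\mathbb{E}[|\mathcal{L}_{\epsilon,U}(u)|^2]\lesssim \epsilon^2 U^4u^4\exp(T\sigma^2U^2u^2).
\end{equation*}
A Laplace-type estimate, either by the change of variables $u=1-v/U^2$ or by integration by parts, gives $\int_0^1 u^4\exp(T\sigma^2U^2u^2)\,du\lesssim U^{-2}\exp(T\sigma^2U^2)$ for $\sigma>0$ (and is trivially $O(1)$ for $\sigma=0$), hence
\begin{equation*}
\mathbb{E}\Bigl[\int_0^1|\mathcal{L}_{\epsilon,U}(u)|^2\,du\Bigr]\lesssim \epsilon^2 U^2\exp(T\sigma^2U^2).
\end{equation*}

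Combining the pieces, Markov's inequality yields for any $\eta>0$
\begin{equation*}
\mathbb{Q}\Bigl(\Omega_1\cap\Bigl\{\Bigl|\int_0^1 w_{U(\epsilon)}\mathcal{R}_{\epsilon,U(\epsilon)}du\Bigr|>\eta\,\epsilon e^{T\sigma_0^2U(\epsilon)^2/2}\Bigr\}\Bigr)\lesssim \frac{\epsilon U(\epsilon)^2\exp(T(\sigma^2-\sigma_0^2/2)U(\epsilon)^2)}{\eta}.
\end{equation*}
For $\sigma\le\sigma_0$ the exponential is at most $\exp(T\sigma_0^2U(\epsilon)^2/2)$, so the right-hand side is bounded by $a_\epsilon/(\eta\sqrt{\log U(\epsilon)})$, which tends to $0$. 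Together with $\mathbb{Q}(\Omega_1^c)\to 0$ this proves the claim. The delicate point I expect to be the main obstacle is the Laplace-type integral estimate: one must verify that averaging over $u\in[0,1]$ genuinely saves the factor $U^{-2}$ relative to the supremum bound from Proposition~\ref{sup L}, because a direct use of $\sup_u|\mathcal{L}_{\epsilon,U}|^2$ is off by a $U^2\sqrt{\log U}$ factor and would not close the argument under the given condition on $U(\epsilon)$.
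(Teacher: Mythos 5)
Your proposal is correct and follows essentially the same route as the paper: split on the event $\{\sup_u T|\mathcal{L}_{\epsilon,U}(u)|\le 1/2\}$ (controlled by Proposition~\ref{sup L}), use the quadratic bound for the trimmed logarithm there, and then bound $\mathbb{E}[\int_0^1|\mathcal{L}_{\epsilon,U}(u)|^2\mathrm{d}u]$ via the It\^o isometry, where the integral $\int_0^1 u\,e^{T\sigma^2U^2u^2}\mathrm{d}u\lesssim U^{-2}e^{T\sigma^2U^2}$ supplies exactly the $U^{-2}$ gain over the supremum bound that you correctly identify as the crux. The only blemish is your intermediate pointwise bound $\mathbb{E}[|\mathcal{L}_{\epsilon,U}(u)|^2]\lesssim\epsilon^2U^4u^4e^{T\sigma^2U^2u^2}$, which omits the term $\epsilon^2U^2u^2e^{T\sigma^2U^2u^2}$ that dominates for $u\lesssim 1/U$ (the paper uses $(U^2+U^4)u$); this is harmless since that term contributes only $O(\epsilon^2e^{T\sigma^2U^2})$ after integration.
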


\begin{proof}
By the identity $\mathcal{R}_{n,U}(u) =(1/T)\log_{\ge\kappa^{U}(u)}(1+T\mathcal{L}_{n,U}(u))-\mathcal{L}_{n,U}(u)$, where $\kappa^{U}(u)\le1/2$, we have to show that for $U=U_n$
\begin{equation}\label{remainder}
\frac1{\epsilon_n\exp(T\sigma_0^2U^2/2)}\int_0^1 w_U(u)\left(\log_{\ge\kappa^{U}(u)}(1+T\mathcal{L}_{n,U}(u))-T\mathcal{L}_{n,U}(u)\right) \mathrm{d}u
\end{equation}
converges in probability to zero.
For $z\in\C$ holds $\log(1+z)-z=O(|z|^2)$ as $|z|\to0$. We define $g$ by
$g(z):=(\log(1+z)-z)/|z|^2$ for $z\neq0$ and $g(0):=0$. There are $M$ and
$\eta>0$ such that $|g(z)|\le M$ for all $|z|\le\eta$. We may assume that
$\eta\le1/2$.
If the logarithm in the definition of $g$ is replaced by the trimmed logarithm
$\log_{\ge\kappa}$ with some $\kappa\in(0,1/2]$ then $g$ remains unchanged for
$|z|\le1/2$. Thus, the statement holds uniformly for all
$g_\kappa(z):=(\log_{\ge\kappa}(1+z)-z)/|z|^2$ with $\kappa\in(0,1/2]$.

By Proposition~\ref{sup L} we have $\sup_{u\in[-1,1]}|\mathcal{L}_{n,U}(u)| \xrightarrow{\mathbb{P}}0$. Let $\tau>0$ be given. Eventually we have
\begin{align*}
&\phantom{\le}\mathbb{P}\left(\exists u\in[-1,1]: {|\log_{\ge\kappa^{U}(u)}(1+T\mathcal{L}_{n,U}(u))-T\mathcal{L}_{n,U}(u)|} >M T^2 {|\mathcal{L}_{n,U}(u)|^2}\right)\\
&\le\mathbb{P}\left(T\sup_{u\in[-1,1]}|\mathcal{L}_{n,U}(u)|>\eta\right)<\tau.
\end{align*}
Except on a set with probability less than $\tau$ we have eventually
\begin{align}
&\phantom{le}\frac1{\epsilon_n\exp(T\sigma_0^2U^2/2)}\left|\int_0^1 w_U(u) \left(\log_{\ge\kappa^{U}(u)}(1+T\mathcal{L}_{n,U}(u))-T\mathcal{L}_{n,U}(u)\right) \mathrm{d}u\right|\nonumber\\
&\le  \frac{MT^2}{\epsilon_n\exp(T\sigma_0^2U^2/2)} \int_0^1 |w_U(u)\mathcal{L}_{n,U}(u)^2|\mathrm{d}u.\label{dominator}
\end{align}
Hence \eqref{remainder} converges in probability to zero if \eqref{dominator} converges in probability to zero.
The convergence
\[\frac1{\epsilon_n\exp(T\sigma_0^2U^2/2)}\int_{0}^1 |w_U(u)\mathcal{L}_{n,U}(u)^2|\mathrm{d}u\to0\] holds even in $L^1$ since
\begin{align}
&\phantom{\;=\;}\frac1{\epsilon_n\exp(T\sigma_0^2U^2/2)}\mathbb{E}\left[\int_{0}^1 |w_U(u)\mathcal{L}_{n,U}(u)^2|\mathrm{d}u\right]\label{dominator II}\\
&\le\frac{C}{\epsilon_n\exp(T\sigma_0^2U^2/2)}\mathbb{E}\left[\int_{0}^1 |\mathcal{L}_{n,U}(u)^2|\mathrm{d}u\right]
\ifodd\switch
\nonumber\\
\else
\nonumber\displaybreak[0]\\
\fi
&\le\frac{C}{\epsilon_n\exp(T\sigma_0^2U^2/2)}\nonumber\\
&\phantom{\;=\;}\int_{0}^1\left|\frac{\epsilon_n \: iUu(1+iUu)}{T(1+iUu(1+iUu)\mathcal{FO}(Uu))}\right|^2
\mathbb{E}\left[\left|\int_{-\infty}^\infty
e^{iUux}\rho(x)\mathrm{d}W(x)\right|^2\right]\mathrm{d}u\nonumber\\
&\le \frac{C}{\epsilon_n\exp(T\sigma_0^2U^2/2)}\int_0^1 \frac{\epsilon_n^2
(U^2+U^4)u\exp(T\sigma^2U^2u^2)\|\rho\|_{L^2(\R)}^2}
{T^2\exp(2T(\sigma^2/2+\gamma-\lambda)-2T\|\mathcal{F}\mu\|_\infty)}\mathrm{d}u,
\label{dominator III}\\
\intertext{for $\sigma=0$ this converges to zero and for $\sigma>0$ we further calculate, }
&= \frac{C\epsilon_n (1+U^2)\|\rho\|_{L^2(\R)}^2
\int_0^1 2T\sigma^2U^2u\exp(T\sigma^2U^2u^2)\mathrm{d}u} {\exp(T\sigma_0^2U^2/2)2T^3\sigma^2 \exp(2T(\sigma^2/2+\gamma-\lambda)-2T\|\mathcal{F}\mu\|_\infty)} \nonumber\\
&= \frac{C\epsilon_n (1+U^2)\|\rho\|_{L^2(\R)}^2
(\exp(T\sigma^2U^2)-1)} {\exp(T\sigma_0^2U^2/2)2T^3\sigma^2 \exp(2T(\sigma^2/2+\gamma-\lambda)-2T\|\mathcal{F}\mu\|_\infty)} \nonumber\\
&\le \frac{C\|\rho\|_{L^2(\R)}^2\epsilon_n (1+U^2)
(\exp(T\sigma_0^2U^2/2))} {2T^3\sigma^2 \exp(2T(\sigma^2/2+\gamma-\lambda)-2T\|\mathcal{F}\mu\|_\infty)}  \to 0\nonumber
\end{align}
as $n\to\infty$. Thus, \eqref{remainder} converges in probability to zero.
\qed
\end{proof}

\begin{lemma}\label{linearization_0}
Let $w_U\in L^\infty ([0,1],\C)$ be Riemann--integrable and let there be a constant $C>0$ such that $\|w_U\|_\infty\le C$ for all $U\ge1$.
If $U_n\to\infty$ and $\epsilon_n U_n^{5/2} \to 0$ as $n\to\infty$, then for all L\'evy triplets with $\sigma=0$
\begin{align*}
\frac1{\epsilon_n U_n^{3/2}}&\int_{0}^1w_{U_n}(u)
\mathcal{R}_{\epsilon_n,U_n}(u) \mathrm{d}u  \xrightarrow{\mathbb{P}}0
\end{align*}
as $n\to\infty$.
\end{lemma}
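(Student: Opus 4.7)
The plan is to mirror the proof of Lemma~\ref{linearization_>0}, with the exponential factor $\exp(T\sigma_0^2 U^2/2)$ replaced by the polynomial factor $U^{3/2}$, and to verify that the endpoint estimate in the $\sigma=0$ case forces exactly the scaling $\epsilon U^{5/2}$ that vanishes by hypothesis.

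First I would start from the identity
\[
\mathcal{R}_{\epsilon,U}(u)=\tfrac{1}{T}\log_{\ge\kappa^U(u)}\!\bigl(1+T\mathcal{L}_{\epsilon,U}(u)\bigr)-\mathcal{L}_{\epsilon,U}(u),
\]
and use the uniform Taylor bound already proved in Lemma~\ref{linearization_>0}: there exist $M,\eta>0$ (with $\eta\le 1/2$) such that for every $\kappa\in(0,1/2]$ and every $|z|\le\eta$,
\[
\bigl|\log_{\ge\kappa}(1+z)-z\bigr|\le M|z|^2.
\]
Next I would apply Proposition~\ref{sup L} with $\sigma=0$: the supremum of $|\mathcal{L}_{\epsilon,U}|$ on $[-1,1]$ is $O_{\mathbb{Q}}(\epsilon U^2\sqrt{\log U})$, and since $\epsilon U^2\sqrt{\log U}=(\epsilon U^{5/2})\,U^{-1/2}\sqrt{\log U}\to 0$ under the hypothesis $\epsilon U(\epsilon)^{5/2}\to 0$, we obtain $\sup_{u\in[-1,1]}|\mathcal{L}_{\epsilon,U}(u)|\xrightarrow{\mathbb{Q}}0$. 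Consequently, outside an event of arbitrarily small probability,
\[
\frac{1}{\epsilon U^{3/2}}\left|\int_0^1 w_U(u)\bigl(\log_{\ge\kappa^U(u)}(1+T\mathcal{L}_{\epsilon,U}(u))-T\mathcal{L}_{\epsilon,U}(u)\bigr)\mathrm{d}u\right|
\le \frac{MT^2}{\epsilon U^{3/2}}\int_0^1 |w_U(u)|\,|\mathcal{L}_{\epsilon,U}(u)|^2\mathrm{d}u.
\]

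Then I would show convergence of the right-hand side to zero in $L^1$, which implies convergence in probability. Bounding $|w_U|\le C$ and applying the Itô isometry as in \eqref{dominator II}--\eqref{dominator III} of the proof of Lemma~\ref{linearization_>0}, but now with $\sigma=0$ so that $|1+iUu(1+iUu)\mathcal{FO}(Uu)|=\exp(T\,\mathrm{Re}\,\psi(Uu))$ stays bounded away from zero by a constant depending only on $\gamma,\lambda$ and $\|\mathcal{F}\mu\|_\infty$, one obtains
\[
\frac{1}{\epsilon U^{3/2}}\,\mathbb{E}\!\left[\int_0^1|\mathcal{L}_{\epsilon,U}(u)|^2\mathrm{d}u\right]
\lesssim \frac{1}{\epsilon U^{3/2}}\int_0^1 \epsilon^2\,U^2 u^2(1+U^2u^2)\,\|\delta\|_{L^2(\mathbbm{R})}^2\,\mathrm{d}u\lesssim \epsilon\,U^{5/2},
\]
which tends to zero by assumption.

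The argument is essentially routine given Proposition~\ref{sup L} and the $L^\infty$-bound on the weights; the only point that requires care is the bookkeeping of powers of $U$. In the $\sigma>0$ case the exponential growth of $1/|\exp(T\psi(Uu))|$ produced the factor $\exp(T\sigma^2U^2/2)$, whereas for $\sigma=0$ that factor is absent and the entire growth comes from the polynomial factor $|iUu(1+iUu)|^2\lesssim U^2(1+U^2)$ in the numerator of $\mathcal{L}_{\epsilon,U}$. So the main thing to check — and what distinguishes this lemma from its $\sigma>0$ counterpart — is that the scaling $\epsilon U^{-3/2}\cdot \epsilon^2 U^{4}=\epsilon\cdot \epsilon U^{5/2}\to 0$ precisely matches the undersmoothing condition $\epsilon U(\epsilon)^{5/2}\to 0$ supplied in the hypothesis.
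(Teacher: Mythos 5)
Your proposal is correct and follows essentially the same route as the paper: the paper's own proof of this lemma simply reruns the argument of Lemma~\ref{linearization_>0} with $\sigma_0=\sigma=0$, replaces the normalization $\epsilon\exp(T\sigma_0^2U^2/2)$ by $\epsilon U^{3/2}$, uses Proposition~\ref{sup L} (whose bound reduces to $\epsilon U^2\sqrt{\log U}\to0$ under $\epsilon U^{5/2}\to0$) to control the Taylor remainder of the trimmed logarithm, and bounds the quadratic term in $L^1$ via the It\^o isometry to obtain $\lesssim\epsilon(U^{1/2}+U^{5/2})\to0$. Your bookkeeping of the powers of $U$ and the role of the undersmoothing condition matches the paper exactly.
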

\begin{proof}
We follow the proof of Lemma~\ref{linearization_>0}. $\sup_{u\in[-1,1]}|\mathcal{L}_{n,U}(u)| \xrightarrow{\mathbb{P}}0$ holds by Proposition~\ref{sup L}. We set $\sigma_0=0$ and divide by $U^{3/2}$ in \eqref{remainder} and \eqref{dominator}.
Then we use that \eqref{dominator II} is bounded by \eqref{dominator III}, where we set $\sigma_0=\sigma=0$ and divide by $U^{3/2}$ again.
We obtain
\[
\frac1{\epsilon_n U^{3/2}}\mathbb{E}\left[\int_{0}^1 |w_U(u)\mathcal{L}_{n,U}(u)^2|\mathrm{d}u\right]
\le  \frac{C\epsilon_n\:(U^{1/2}+U^{5/2}) \|\rho\|_{L^2(\R)}^2}
{T^2\exp(T(2(\gamma-\lambda)-2\|\mathcal{F}\mu\|_\infty ))} \rightarrow  0
\]
as $\epsilon_n\rightarrow0$, which implies the desired convergence.
\qed
\end{proof}

\subsection{The approximation errors}

The approximation error can be controlled as in \cite{calibration} using the order conditions \eqref{order s} on the weight functions.
The L\'evy triplet $\mathcal{T}=(\sigma^2,\gamma,\mu)$ was assumed to be contained in $\mathcal{G}_s(R,\sigma_{\max})$, especially $\mu$ is s--times weakly differentiable and we have $\max_{0\le k\le s}\|\mu^{(k)}\|_{L^2(\mathbb{R})}\le R$, $\|\mu^{(s)}\|_{\infty}\le R$.

We use $(iu)^s\mathcal{F}\mu(u)=\mathcal{F}\mu^{(s)}(u)$ and the Plancherel identity to bound the approximation error by
\ifodd\switch
\begin{align}
&\phantom{\;=\;}
\left|\frac2{U^2}\int_{0}^{1}\mathrm{Re}(\mathcal{F}\mu(Uu)) w_{\sigma}^{1}(u)\mathrm{d}u\right|
=\frac{1}{U^2}\left|\int_{-1}^{1}\mathcal{F}\mu(Uu) w_{\sigma}^{1}(u)\mathrm{d}u\right|\nonumber\\
&=\frac{2\pi}{U^2}\left|\int_{-\infty}^\infty\mu^{(s)}(x/U)U^{-1} \overline{\mathcal{F}^{-1}(w_{\sigma}^{1}(u)/(iUu)^s)(x)}\mathrm{d}x\right|\nonumber\\
&
\le U^{-(s+3)} \|\mu^{(s)}\|_{\infty}\|\mathcal{F}(w^1_\sigma(u)/u^s)\|_{L^1(\R)}.\label{bias_sigma}
\end{align}
\else
\begin{align}
&\phantom{\;=\;}
\left|\frac2{U^2}\int_{0}^{1}\mathrm{Re}(\mathcal{F}\mu(Uu)) w_{\sigma}^{1}(u)\mathrm{d}u\right|
=\frac{1}{U^2}\left|\int_{-1}^{1}\mathcal{F}\mu(Uu) w_{\sigma}^{1}(u)\mathrm{d}u\right|\nonumber\displaybreak[0]\\
&=\frac{2\pi}{U^2}\left|\int_{-\infty}^\infty\mu^{(s)}(x/U)U^{-1} \overline{\mathcal{F}^{-1}(w_{\sigma}^{1}(u)/(iUu)^s)(x)}\mathrm{d}x\right|\nonumber\displaybreak[0]\\
&
\le U^{-(s+3)} \|\mu^{(s)}\|_{\infty}\|\mathcal{F}(w^1_\sigma(u)/u^s)\|_{L^1(\R)}.\label{bias_sigma}
\end{align}
\fi
Analogously we obtain
\begin{align}
\phantom{\;=\;}\left|\frac2{U} \int_{0}^{1}\mathrm{Im}(\mathcal{F}\mu(Uu))w_{\gamma}^{1}(u)\mathrm{d}u\right|
&\le U^{-(s+2)} \|\mu^{(s)}\|_{\infty}\|\mathcal{F}(w^1_\gamma(u)/u^s)\|_{L^1(\R)},\label{bias_gamma}\\
\left|2\int_{0}^{1}\mathrm{Re}(\mathcal{F}\mu(Uu))w_{\lambda}^{1}(u)\mathrm{d}u\right|
&\le U^{-(s+1)} \|\mu^{(s)}\|_{\infty}\|\mathcal{F}(w^1_\lambda(u)/u^s)\|_{L^1(\R)}.\label{bias_lambda}
\end{align}
The last error term in \eqref{error_mu} can be bounded by
\begin{align}
&\left|U\mathcal{F}^{-1}\left[(1-w_\mu^1(u)) \mathcal{F}\mu(Uu)\right](Ux)\right|
=\frac{U}{2\pi}\left|\int_{-\infty}^\infty (1-w_\mu^1(u))\mathcal{F}\mu(Uu)e^{-iUux}\mathrm{d}u\right|\nonumber\\
&
=\frac1{2\pi U^{s-1}}\left|\int_{-\infty}^\infty \overline{\frac{1-w_\mu^1(u)}{u^{s}}e^{iUux}}\mathcal{F}\mu^{(s)}(Uu)\mathrm{d}u\right|\nonumber\\
&= U^{-s}\left|\int_{-\infty}^\infty \overline{\mathcal{F}^{-1}\left(\frac{1-w_\mu^1(u)}{u^{s}}e^{iUux}\right)(y)} \mu^{(s)}\left(\frac{y}{U}\right)\mathrm{d}y\right|\nonumber\\
&
\le \frac{\|\mu^{(s)}\|_{\infty}}{2\pi U^{s}} \left\| \mathcal{F}\left(\frac{1-w_\mu^1(u)}{u^{s}}\right) \right\|_{L^1(\R)}. \label{bias_mu}
\end{align}

\bibliography{bib} 

\end{document}